 \let\mathscr\relax% just so we can load this and rsfs
\newcommand{\edesirs}{\mathcal{E}}
\newcommand{\pspace}{\varOmega}
\newcommand{\domain}{\mathcal{K}}
\newcommand{\gambles}{\mathcal{L}}
\newcommand{\posi}{\mathsf{posi}}
\newcommand{\desirs}{\mathcal{D}}
\newcommand{\pr}{P}
\newcommand{\lpr}{{\underline{\pr}}}
\newcommand{\partit}{\mathcal{P}}
\begin{document}
\title{Algebras of Sets and Coherent Sets of Gambles}
%
%\titlerunning{Abbreviated paper title}
% If the paper title is too long for the running head, you can set
% an abbreviated paper title here
%
\author{Juerg Kohlas\inst{1}\orcidID{0000-0001-8427-5594}\and
Arianna Casanova\inst{2}\orcidID{0000-0001-6765-0292} \and
Marco Zaffalon\inst{2}\orcidID{0000-0001-8908-1502}}
\authorrunning{Kohlas et al.}
% First names are abbreviated in the running head.
% If there are more than two authors, 'et al.' is used.
%
\institute{Department of Informatics DIUF, University of Fribourg, Switzerland \\
\email{juerg.kohlas@unifr.ch}\\
\and
Istituto Dalle Molle di Studi sull’Intelligenza Artificiale (IDSIA), Switzerland\\
\email{\{arianna,zaffalon\}@idsia.ch}}
\maketitle              % typeset the header of the contribution
\begin{abstract}
%The abstract should briefly summarize the contents of the paper in
%15--250 words.
In a recent work we have shown how to construct an information algebra of coherent sets of gambles defined on general possibility spaces. Here we analyze the connection of such an algebra with the \emph{set algebra} of subsets of the possibility space on which gambles are defined and the \emph{set algebra} of sets of its \emph{atoms}. Set algebras are particularly important information algebras since they are their \emph{prototypical} structures.  Furthermore, they are the algebraic counterparts of classical propositional logic. As a consequence, this paper also details how propositional logic is naturally embedded into the theory of \emph{imprecise probabilities}. %As before, also in this work we consider general possibility spaces.

\keywords{Desirability \and Information algebras \and Order theory \and Imprecise probabilities \and Coherence.}
\end{abstract}

\section{Introduction and Overview}
While analysing the compatibility problem of coherent sets of gambles, Miranda and Zaffalon \cite{mirzaffalon20} have recently remarked that their main results could be obtained also using the theory of information algebras \cite{kohlas03}.
%This point of view however, was not worked out in \cite{mirzaffalon20}.  
This observation has been taken up and deepened in some of our recent work \cite{kohlas21,kohlas21b}: we have shown that the founding properties of desirability can in fact be abstracted into properties of information algebras. Stated differently, desirability makes up an information algebra of coherent set of gambles. %A similar scope was pursued by De Cooman in \cite{decooman05}. He discovered indeed, that there is a common order-theoretic structure underlying many of the models for representing beliefs in the literature, including lower previsions and sets of almost desirable gambles. Even if they share some elements, the latter focuses more on the study of \emph{belief dynamics} (belief expansion and belief revision). An explicit comparison of the two approaches nevertheless, can be found in our previous work \cite{kohlas21}.

Information algebras are algebraic structures composed by `pieces of information' that can be manipulated by operations of \emph{combination}, to aggregate them, and \emph{extraction}, to extract information regarding a specific question.
From the point of view of information algebras, sets of gambles defined on a possibility space $\pspace$ are pieces of information about $\pspace$. %In particular, this paper is intended as an extension of our previous paper, in which we treat the particular case where
%information one is interested in concerns the values of certain groups of
%variables $\{X_i: \; i \in I\}$ with $I$ an index set, $\pspace = \bigtimes_{i \in I} \pspace_i$, where $\pspace_i$ is the set of possible values of $X_i$, and $\omega \equiv_S \omega' \iff \omega|_S=  \omega'|_S$, \footnote{If we think of $\omega \in \pspace$, as a map $\omega: I \rightarrow \pspace$, $\omega|_S$ is the restriction of the map $\omega$ to $S$.} for every $S \subseteq I$ and $\omega, \omega' \in \pspace$. (see \cite{kohlas21}). %to the case in which the possibility space on which gambles are defined,  does not necessarily factorize in 
%domains of variables. 
  %Such pieces of information can be aggregated and the information they contain about specific questions can be extracted.  This leads to an algebraic structure satisfying a number of simple axioms.  
 It is well known that coherent sets of gambles are ordered by inclusion and, in this order, there are maximal elements~\cite{CooQua12}. In the language of information algebras such elements are called \emph{atoms}. In particular, any coherent set of gambles is contained in a maximal set (an atom) and it is the intersection (meet) of all the atoms it is contained in. An information algebra with these properties is called atomistic.
 Atomistic information algebras have the universal property of being embedded in a set algebra, which is an information algebra whose elements are sets. This is an important representation theorem for information algebras, since set algebras are a special kind of algebras based on the usual set operations. Conversely, any such set algebra of subsets of $\pspace$ is embedded in the algebra of coherent sets of gambles defined on $\pspace$. %This has some importance for conditioning ***why?***, a subject which however is not pursued in this paper. 
 These links between set algebras and the algebra of coherent sets of gambles are the main topic of the present work.

 After recalling the main concepts introduced in our previous work in Sections~\ref{sec:DesGambles}--\ref{sec:InfAlgs}, in Section~\ref{sec:Atoms} we establish the basis to show that sets of atoms of the information algebra of coherent sets of gambles, form indeed a set algebra. In Section \ref{sec:homomorphism} we define the concept of \emph{embedding} for information algebras and finally, in Section \ref{sec:SetAlg}, we show the links between set algebras (of subsets of $\pspace$ and of sets of atoms) and the algebra of coherent sets of gambles. %Finally, in Section \ref{sec:SetAlg}, we show that an instance of a very important class of information algebras, called \emph{set algebras}, can be embedded into the information algebra of coherent sets of gambles. This is an interesting result because set algebras can be considered as the \emph{prototypes} of information algebras (see \cite{kohlas03}).
 
 Since set algebras are algebraic counterparts of classical propositional logic, the results of this paper details how the latter is formally part of the theory of imprecise probabilities \cite{walley91}. We refer also to \cite{decooman05} for another aspect of this issue.

\section{Desirability} \label{sec:DesGambles}

Consider a set $\pspace$ of possible worlds. A gamble over this set is a bounded function
$f : \pspace \rightarrow \mathbb{R}$.
It is interpreted as an uncertain reward in a linear utility scale. A subject might desire a gamble or not, depending on the information they have about the experiment whose possible outcomes are the elements of $\pspace$.
We denote the set of all gambles on $\pspace$ by $\mathcal{L}(\pspace)$, or more simply by $\mathcal{L}$, when there is no possible ambiguity. We also introduce $\mathcal{L}^+(\pspace) \coloneqq \{ f \in \mathcal{L}(\pspace): \; f\geq 0, f \not= 0\}$, or simply $\mathcal{L}^+$ when no ambiguity is possible, the set of non-negative non-vanishing gambles. These gambles should always be desired, since they may increase the wealth with no risk of decreasing it.
As a consequence of the linearity of our utility scale, we assume also that a subject disposed to accept the transactions represented by $f$ and $g$, is disposed to accept also $\lambda f + \mu g$ with $\lambda, \mu \ge 0$ not both equal to $0$.
More generally speaking, we consider the notion of a coherent set of gambles \cite{walley91}:
%A coherent set of gambles over $\pspace$ is a subset $\desirs$ of $\mathcal{L}(\pspace)$ such that
\begin{comment}
\begin{enumerate}
\item $\mathcal{L}^+ \subseteq \desirs$,
\item $0 \not\in \desirs$,
\item $f,g \in \desirs$ implies $f + g \in \desirs$,
\item $f \in \desirs$, and $\lambda > 0$ implies $\lambda \cdot f \in \desirs$.
\end{enumerate}
\end{comment}
\begin{definition}[\textbf{Coherent set of gambles}]
We say that a subset $\desirs$ of $\gambles$ is a \emph{coherent} set of gambles if and only if $\desirs$ satisfies the following properties:
\begin{enumerate}[label=\upshape D\arabic*.,ref=\upshape D\arabic*]
\item\label{D1} $\gambles^+ \subseteq \desirs$ [Accepting Partial Gains];
\item\label{D2} $0\notin \desirs$ [Avoiding Status Quo];
\item\label{D3} $f,g \in \desirs \Rightarrow f+g \in \desirs$ [Additivity];
\item\label{D4} $f \in \desirs, \lambda>0 \Rightarrow \lambda f \in \desirs$ [Positive Homogeneity].
\end{enumerate}
\end{definition}
So, $\desirs$ is a convex cone. Let us denote with $C(\pspace)$, or simply with $C$, the family of coherent sets of gambles on $\pspace$.
This leads to the concept of natural extension.
\begin{definition}[\bf{Natural extension for gambles}] \label{def:natex}Given a set $\domain\subseteq\gambles$, we call $
\edesirs(\domain) \coloneqq\posi(\domain\cup\gambles^+)$,
where $\posi(\domain')\coloneqq\left\{ \sum_{j=1}^{r} \lambda_{j}f_{j}: f_{j} \in \domain', \lambda_{j} > 0, r \ge 1\right\}$,
for every set $\domain' \subseteq \gambles$, its \emph{natural extension}.
\end{definition}
%If $\desirs'$ is any subset of $\mathcal{L}(\pspace)$, then 
%\begin{eqnarray*}
%\mathcal{E}(\desirs') = \posi(\mathcal{L}^+(\pspace)\cup \desirs'),
%\end{eqnarray*}
%is called the natural extension of a set of gambles, where $\posi(\desirs)$ denotes all finite positive linear combinations $\lambda_1 f_1 + \ldots + \lambda_n f_n$. $\lambda_i > 0$ of elements of $\desirs$. 
The natural extension $\edesirs(\desirs)$ of a set of gambles $\desirs$ is coherent if and only if $0 \not\in \edesirs(\desirs)$.

%Coherent sets are closed under intersection, that is they form a topless $\cap$-structure. By standard order theory (see \cite{daveypriestley97}), they are ordered by inclusion, intersection is meet in this order and a join exists if they have an upper bound among coherent sets
%\begin{eqnarray*}
%\bigvee_{i \in I} \desirs_i \coloneqq \bigcap \{\desirs \in C(\pspace):  \bigcup_{i \in I} \desirs_i \subseteq \desirs \},
%\end{eqnarray*}
%if we indicate with $C(\pspace)$, or simply with $C$, the family of coherent sets of gambles on $\pspace$.

%Notice also that, if $0 \notin \mathcal{E}(\desirs')$, it is the smallest coherent set containing $\desirs'$. Therefore, if $\mathcal{E}(\bigcup_{i \in I} \desirs_i)$ is coherent, we have
%\begin{eqnarray*}
%\mathcal{E}(\desirs') = \bigcap \{\desirs \in C(\pspace): \desirs' \subseteq \desirs\},
%\end{eqnarray*}
%so that 
%\begin{eqnarray*}
%\bigvee_{i \in I} \desirs_i = \mathcal{E}(\bigcup_{i \in I} \desirs_i).
%\end{eqnarray*}
%In view of the following section, it is convenient to add $\mathcal{L}(\pspace)$ to $C(\pspace)$ and let $\Phi(\pspace) \coloneqq C(\pspace) \cup \{\mathcal{L}(\pspace)\}$. The family of sets in $\Phi(\pspace)$, or simply $\Phi$ where there is no possible ambiguity, is still a $\cap$-structure, but now a topped one. 

In \cite{kohlas21b} we showed that $\Phi(\pspace) \coloneqq C(\pspace) \cup \{\mathcal{L}(\pspace)\}$, or simply $\Phi$ when there is no possible ambiguity, is a complete lattice under inclusion \cite{daveypriestley97}, meet is intersection and join is defined for any family of sets $\desirs_i \in \Phi$ as
\begin{equation*}
\bigvee_{i \in I} \desirs_i \coloneqq \bigcap \left\{\desirs \in \Phi: \bigcup_{i \in I} \desirs_i \subseteq \desirs\right\}.
\end{equation*}
Note that, if the family of coherent sets $\desirs_i$ has no upper bound in $C$, then its join is simply $\mathcal{L}$. Moreover, we defined the following closure operator \cite{daveypriestley97} on subsets of gambles.
\begin{equation}\label{eq:closureoperatorC}
\mathcal{C}(\desirs') \coloneqq \bigcap \{\desirs \in \Phi: \desirs' \subseteq \desirs\}.
\end{equation}
%It can be shown that is a closure (or consequence) operator on subsets of gambles (see \cite{daveypriestley97}).
%\begin{lemma}
%The operator defined in \eqref{eq:closureoperatorC} is a closure operator on subsets of $\gambles$, i.e. for every $\desirs,\; \desirs' \subseteq \gambles$:
%\begin{enumerate}
%\item $\desirs \subseteq \mathcal{C}(\desirs)$;
%\item $\desirs \subseteq \desirs'$ implies $\mathcal{C}(\desirs) \subseteq \mathcal{C}(\desirs')$;
%\item $\mathcal{C}(\mathcal{C}(\desirs)) = \mathcal{C}(\desirs)$.
%\end{enumerate}
%\end{lemma}
%For further reference, it is easy to prove also the following well-known result.
%\begin{lemma} \label{le:UnionClosSets}
%For any $\desirs_1, \desirs_2 \subseteq \gambles$ we have:
%\begin{eqnarray*}
%\mathcal{C}(\mathcal{C}(\desirs_1) \cup \desirs_2) = \mathcal{C}(\desirs_1 \cup \desirs_2).
%\end{eqnarray*}
%\end{lemma}
It is possible to notice that $\mathcal{C}(\desirs) = \mathcal{E}(\desirs)$ if $0 \not\in \mathcal{E}(\desirs)$, that is if $\mathcal{E}(\desirs)$ is coherent. Otherwise we may have $\mathcal{E}(\desirs) \not= \mathcal{L}(\pspace)$.  %These results prepare the way to an information algebra of coherent sets of gambles (see Section \ref{sec:InfAlgs}). 

The most informative cases of coherent sets of gambles, i.e., coherent sets that are not proper subsets of other coherent sets, are called \textit{maximal}. The following proposition provides a characterisation of such maximal elements~\cite[Proposition~2]{CooQua12}.
\begin{proposition}[\textbf{Maximal coherent set of gambles}]
A coherent set of gambles $\desirs$ is \emph{maximal} if and only if
\begin{equation*}
(\forall f \in \gambles \setminus \{0\})\ f \notin \desirs \Rightarrow -f \in \desirs.
\end{equation*}
\end{proposition}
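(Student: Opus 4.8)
The plan is to prove the biconditional characterisation of maximality by proving each direction as a contrapositive argument, using the defining properties D1--D4 together with the natural-extension machinery.

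For the forward direction, suppose $\desirs$ is maximal and yet, for contradiction, there exists some $f \in \gambles \setminus \{0\}$ with $f \notin \desirs$ and $-f \notin \desirs$. The idea is to enlarge $\desirs$ by throwing in $f$ and closing up, i.e. to consider $\edesirs(\desirs \cup \{f\}) = \posi(\desirs \cup \{f\} \cup \gambles^+)$. Since $\desirs$ is a convex cone already containing $\gambles^+$, every element of this extension has the form $g + \lambda f$ with $g \in \desirs \cup \{0\}$ (absorbing the $\gambles^+$ part and the cone structure of $\desirs$) and $\lambda \ge 0$. To contradict maximality it suffices to show this extension is still coherent, which by the remark after Definition~\ref{def:natex} amounts to checking $0 \notin \edesirs(\desirs \cup \{f\})$. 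If $0$ were in the extension we would have $g + \lambda f = 0$; the case $\lambda = 0$ gives $g = 0 \in \desirs$, contradicting D2 for $\desirs$, while $\lambda > 0$ gives $f = -g/\lambda$, and positive homogeneity (D4) would force $-f \in \desirs$, contradicting our assumption. Hence the extension is coherent and strictly larger than $\desirs$ (it contains $f \notin \desirs$), contradicting maximality; therefore $f \notin \desirs$ must imply $-f \in \desirs$.

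For the converse direction, assume the stated implication holds and suppose, again for contradiction, that $\desirs$ is not maximal, so there is a coherent $\desirs'$ with $\desirs \subsetneq \desirs'$. Pick $f \in \desirs' \setminus \desirs$; note $f \neq 0$ since $0 \notin \desirs'$ by D2. Because $f \notin \desirs$, the hypothesis gives $-f \in \desirs \subseteq \desirs'$. But then $f, -f \in \desirs'$, and additivity (D3) yields $f + (-f) = 0 \in \desirs'$, contradicting D2 for $\desirs'$. Hence no such proper coherent superset exists and $\desirs$ is maximal.

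The main obstacle I anticipate is the forward direction: the delicate point is the clean description of $\posi(\desirs \cup \{f\} \cup \gambles^+)$ and the verification that $0$ cannot arise, since one must carefully argue that an arbitrary positive combination collapses to the normal form $g + \lambda f$ with $g \in \desirs$. This relies on $\desirs$ being a convex cone containing $\gambles^+$, so that sums and positive scalings of its own elements and of $\gambles^+$ stay inside $\desirs$, leaving only the single $f$-term to track. Once that normal form is established the coherence check is routine via D2 and D4. The converse direction is comparatively immediate and serves mainly to confirm the equivalence.
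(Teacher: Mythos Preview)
The paper does not supply its own proof of this proposition: it is simply quoted as \cite[Proposition~2]{CooQua12}, so there is nothing to compare your argument against within the paper itself.

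That said, your argument is essentially the standard one and is correct. One small point worth tightening in the forward direction: having written the generic element of $\edesirs(\desirs\cup\{f\})$ as $g+\lambda f$ with $g\in\desirs\cup\{0\}$ and $\lambda\ge 0$, your case split for $0=g+\lambda f$ does not explicitly cover the sub-case $\lambda>0$ and $g=0$. There your appeal to D4 does not apply (since $0\notin\desirs$), but the conclusion is immediate anyway because $\lambda f=0$ forces $f=0$, contradicting $f\in\gambles\setminus\{0\}$. With that sub-case noted, the coherence of the extension and hence the contradiction with maximality are established, and your converse direction is clean as written.
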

We shall denote maximal sets with $M$ to differentiate them from the general case of coherent sets.
These sets play an important role with respect to information algebras (see Section \ref{sec:Atoms}).
Another important class is that of \emph{strictly desirable} sets of gambles \cite{walley91}.\footnote{Strictly desirable sets of gambles are important because they are in a one-to-one relation with \emph{coherent lower previsions}; these are a generalization of the usual expectation operator on gambles. Given a coherent lower prevision $\lpr(\cdot)$, $D^+ \coloneqq \{ f \in \gambles: \lpr(f) >0 \} \cup \gambles^+$ is a strictly desirable set of gambles~\cite[Section 3.8.1]{walley91}.}
\begin{definition} [\textbf{Strictly desirable set of gambles}]
A coherent set of gambles $D$ is said to be \emph{strictly desirable} if and only if it satisfies
$$(\forall f  \in D \setminus \gambles^+)(\exists \delta >0)\ f- \delta \in D.$$
\end{definition}
For strictly desirable sets, we shall employ the notation $D^+$. 

\section{Stucture of Questions and Possibilities} \label{sec:Questions}

In this section we review the main results about the structure of $\pspace$ \cite{kohlas17,kohlasmonney95,kohlas21b}.
With reference to our previous work \cite{kohlas21b}, we recall that coherent sets of gambles are understood as pieces of information describing beliefs about the elements in $\pspace$. Beliefs may be originally expressed relative to different questions or variables that we identify by families of equivalence relations $\equiv_x$ on $\pspace$ for $x$ in some index set $Q$.
A question $x \in Q$ has the same answer in possible worlds $\omega \in \pspace$ and $\omega' \in \pspace$, if $\omega \equiv_x \omega'$.

%It is possible to consider coherent sets of gambles in a more general frame. 
%As before, let $\pspace$ be a set of possible worlds. We consider families of equivalence relations $\equiv_x$ for $x$ in some index set $Q$. Informally, we mean that $Q$ represents questions, and a question $x \in Q$ has the same answer in possible worlds $\omega$ and $\omega'$, if $\omega \equiv_x \omega'$. 
There is a partial order between questions capturing granularity: question $y$ is finer than question $x$ if $\omega \equiv_y \omega'$ implies $\omega \equiv_x \omega'$. This can be expressed also considering partitions $\partit_x, \; \partit_y$ of $\pspace$ whose blocks are respectively, the equivalence classes $[\omega]_x, \; [\omega]_y$ of the equivalence relations $\equiv_x, \; \equiv_y$, representing possible answers to $x$ and $y$.
Then $\omega \equiv_y \omega'$ implies $\omega \equiv_x \omega'$, meaning that any block $[\omega]_y$ of partition $\partit_y$ is contained in some block $[\omega]_x$ of partition $\partit_x$. %The partition of $\pspace$ by $\partit_x$ is coarser than the one by $\partit_y$. 
If this is the case, we say equivalently that: $x \le y$ or $\partit_x \leq \partit_y$.\footnote{In the literature usually the inverse order between partitions is considered. However, this order better corresponds to our natural order of questions by granularity.}
%if any block of $\partit_y$ is contained in a block of $\partit_x$ \footnote{In order literature usually the inverse order between partitions is considered. However, this order corresponds better to our natural order of question by granularity.}. This order captures the fineness or granularity of questions.
Partitions $Part(\pspace)$ of any set $\pspace$, form a lattice under this order \cite{graetzer03}. In particular, the partition $\sup\{\partit_x,\partit_y\} \coloneqq \partit_x \vee \partit_y$ of two partitions $\partit_x, \partit_y$ is, in this order, the partition obtained as the non-empty intersections of blocks of $\partit_x$ with blocks of $\partit_y$. It can be equivalently expressed also as $\partit_{x \vee y}$. Definition of meet $\partit_x \wedge \partit_y$, or equivalently $\partit_{x \wedge y}$, is somewhat involved \cite{graetzer03}. %When partitions \emph{commute}, it has a simpler formulation (see \cite{kohlas21b}).
We usually assume that the set of questions $Q$ analyzed, considered together with their associated partitions denoted with $\mathcal{Q} \coloneqq \{\partit_x:x \in Q\}$, is a join-sub-semilattice of $(Part(\pspace),\leq)$ \cite{daveypriestley97}. In particular, we assume often that the top partition in $Part(\pspace)$, i.e. $\partit_\top$ (where the blocks are singleton sets $\{\omega\}$ for $\omega \in \pspace$), belongs to $\mathcal{Q}$.
%The join $\partit_x \vee \partit_y$ corresponds to the combined questions $x$ and $y$, i.e $\omega \equiv_{x \vee y} \omega'$ if and only if $\omega \equiv_x \omega'$ and $\omega \equiv_y \omega'$. 
%We may transport the order between partitions to $Q$ and viceversa: $x \leq y$ iff $\partit_x \leq \partit_y$ and we have then $\partit_x \vee \partit_y = \partit_{x \vee y}$ and $\partit_x \wedge \partit_y = \partit_{x \wedge y}$. Furthermore, we assume also often that the top partition in $Part(\pspace)$, i.e. $\partit_\top$ (where the blocks are singleton sets $\{\omega\}$ for $\omega \in \pspace$) belongs to $\mathcal{Q}$.
A gamble $f$ on $\pspace$ is called \emph{$x$-measurable}, iff for all $\omega \equiv_x \omega'$ we have $f(\omega) = f(\omega')$, that is, if $f$ is constant on every block of $\partit_x$. It could then also be considered as a function (a gamble) on the set of blocks of $\partit_x$. We denote with $\mathcal{L}_x(\pspace)$, or more simply with $\mathcal{L}_x$ when no ambiguity is possible, the set of all $x$-measurable gambles. %In particular we have $x \leq y$ if and only if $\mathcal{L}_x$ is a subspace of $\mathcal{L}_y$ and $\mathcal{L}_x,\mathcal{L}_y \subseteq \mathcal{L}_{x \vee y}$. %Note that $\mathcal{L}_x$ as well as $\mathcal{L}(\pspace)$ is a linear space for all $x$.
 %$x \leq y$ if and only if $\mathcal{L}_x$ is a subspace of $\mathcal{L}_y$. So we have $\mathcal{L}_x,\mathcal{L}_y \subseteq \mathcal{L}_{x \vee y}$. In fact $\mathcal{L}_{x \vee y}$ is the smallest subspace containing $\mathcal{L}_x$ and $\mathcal{L}_y$.
%This is so, because if $\mathcal{L}_x,\mathcal{L}_y \subseteq \mathcal{L}_z$, then $x,y, \leq z$, hence $x \vee y \leq z$.

%Sometimes we want to consider partitions so that $\mathcal{L}_{x \wedge y} = \mathcal{L}_x \cap \mathcal{L}_y$. We show below (Lemma \ref{le:CommRel}) that for this case it is sufficient and necessary that $\omega \equiv_{x \wedge y} \omega'$ implies the existence of an $\omega''$ so that $\omega \equiv_x \omega'' \equiv_y \omega'$.

%We consider below coherent sets of gambles as pieces of information, describing beliefs about the likeliness of the possibilities in $\pspace$. However, we may be interested in the content of this information relative to some question $x \in Q$, and we propose how to extract this part of information from the original one. Also, possible beliefs may be originally expressed relative to different questions, and these pieces of information must be combined to an aggregated belief. 

%This leads then to an algebraic structure, called an information algebra (see \cite{kohlas03}). In the form sketched here, it will be, more precisely, a \emph{domain-free information algebra} (Section \ref{sec:InfAlgs}). %Later on, in Section \ref{sec:LabInfAlg}, we consider a \emph{labeled} version of the algebra.
We recall also the logical independence and conditional logical indipendence relation between partitions \cite{kohlas17,kohlasmonney95}. 
\begin{definition}[Independent Partitions]
For a finite set of partitions $\partit_1,\ldots,\partit_n \in Part(\pspace)$, $n \geq 2$, let us define
\begin{equation*}
R(\partit_1,\ldots,\partit_n) \coloneqq \{(B_1,\ldots,B_n):B_i \in \partit_i,\cap_{i=1}^n B_i \not= \emptyset\}.
\end{equation*}
We call the partitions \emph{independent}, if
$R(\partit_1,\ldots,\partit_n) = \partit_1 \times \cdots \times \partit_n$.
\end{definition}

%The intuition behind this definition is the following: $R( \partit_1, \dots, \partit_n)$ contains the tuples of mutually compatible blocks of $\partit_1, \dots, \partit_n$, representing compatible answers to the $n$ questions modelled by the partitions. 
%If they are independent, the answer to a question $\partit_i$ does not constrain the answers to the other questions, or in other words, it contains no information relative to the other questions. 

%Analogously, it can be also introduced a logical conditional independence relation between partitions.
\begin{definition}[Conditionally Independent Partitions]
Consider a finite set of partitions $\partit_1,\dots \partit_n \in Part(\pspace)$, and a block $B$ of a partition $\partit$ (contained or not in the list $\partit_1,\ldots,\partit_n$), then define for $n \geq 1$,
\begin{equation*}
R_B(\partit_1,\ldots,\partit_n) \coloneqq \{(B_1,\ldots,B_n):B_i \in \partit_i,\cap_{i=1}^n B_i \cap B \not= \emptyset\}.
\end{equation*}
We call $\partit_1,\ldots,\partit_n$ \emph{conditionally independent} given $\partit$, iff for all blocks $B$ of $\partit$,
$R_B(\partit_1,\ldots,\partit_n) = R_B(\partit_1) \times \cdots \times R_B(\partit_n)$.
\end{definition}
%So, $\partit_1,\dots,\partit_n$ are conditionally independent given $\partit$, if knowing an answer to $\partit_i$ compatible with $B \in \partit$, gives no information on the answers to the other questions, except that they must each be compatible with $B$. 
This relation holds if and only if $B_i \cap B \not= \emptyset$ for all $i=1,\ldots,n$, implies that $B_1 \cap \ldots \cap B_n \cap B \not= \emptyset$. In this case we write $\bot\{\partit_1,\ldots,\partit_n\} \vert \partit$ or, for $n =2$, $\partit_1 \bot \partit_2 \vert \partit$. $\partit_x \bot \partit_y \vert \partit_z$ can be indicated also with $x \bot y \vert z$. We may also say that  $\partit_1 \bot \partit_2 \vert \partit$ if and only if $\omega \equiv_{\partit} \omega'$ implies the existence of an element $\omega'' \in \pspace$ such that $\omega \equiv_{\partit_1 \vee \partit} \omega''$ and $\omega' \equiv_{\partit_2 \vee \partit} \omega''$.
The three-place relation $\partit_1 \bot \partit_2 \vert \partit$ is, in particular, a \emph{quasi-separoid} \cite{kohlas17}, a retract of the concept of separoid \cite{daveypriestley97}.
\begin{theorem} \label{th:QSepOfPart}
Given $\partit,\partit',\partit_1,\partit_2 \in Part(\pspace)$, we have:
\
\begin{description}
\item[C1] $\partit_1 \bot \partit_2 \vert \partit_2$;
\item[C2] $\partit_1 \bot \partit_2 \vert \partit$ implies $\partit_2 \bot \partit_1 \vert \partit$;
\item[C3] $\partit_1 \bot \partit_2 \vert \partit$ and $\partit' \leq \partit_2$ imply $\partit_1 \bot \partit' \vert \partit$;
\item[C4] $\partit_1 \bot \partit_2 \vert \partit$ implies $\partit_1 \bot \partit_2 \vee \partit \vert \partit$.
\end{description}
\end{theorem}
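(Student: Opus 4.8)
The plan is to work throughout with the block-intersection characterisation stated immediately after the definition of conditional independence, namely that $\partit_1 \bot \partit_2 \vert \partit$ holds if and only if, for every block $B$ of $\partit$ and all blocks $B_1$ of $\partit_1$ and $B_2$ of $\partit_2$, the conditions $B_1 \cap B \neq \emptyset$ and $B_2 \cap B \neq \emptyset$ together imply $B_1 \cap B_2 \cap B \neq \emptyset$. This reformulation is far more convenient than the product form $R_B(\partit_1,\partit_2) = R_B(\partit_1) \times R_B(\partit_2)$, and each axiom reduces to a short manipulation of non-empty intersections of blocks. Properties \textbf{C1} and \textbf{C2} are then essentially immediate. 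For \textbf{C1} the conditioning partition is $\partit_2$ itself, so two blocks $B_2$ and $B$ of $\partit_2$ that meet must coincide, whence $B_1 \cap B_2 \cap B = B_1 \cap B \neq \emptyset$ follows from $B_1 \cap B \neq \emptyset$. For \textbf{C2} the characterising implication is visibly symmetric in the pair $(B_1,B_2)$, so exchanging the roles of $\partit_1$ and $\partit_2$ changes nothing.

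For \textbf{C3} I would first unwind the order convention: $\partit' \leq \partit_2$ means that $\partit_2$ is finer than $\partit'$, so every block $B'$ of $\partit'$ is a union of blocks of $\partit_2$. Given a block $B$ of $\partit$ with $B_1 \cap B \neq \emptyset$ and $B' \cap B \neq \emptyset$, I would pick a point $\omega \in B' \cap B$ and let $B_2$ be the block of $\partit_2$ containing $\omega$; then $B_2 \subseteq B'$ and $B_2 \cap B \neq \emptyset$. Applying the hypothesis $\partit_1 \bot \partit_2 \vert \partit$ gives $B_1 \cap B_2 \cap B \neq \emptyset$, and since $B_2 \subseteq B'$ this yields $B_1 \cap B' \cap B \neq \emptyset$, which is exactly $\partit_1 \bot \partit' \vert \partit$.

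For \textbf{C4} the key observation is that a block $D$ of $\partit_2 \vee \partit$ is a non-empty intersection $B_2 \cap B'$ of a block $B_2$ of $\partit_2$ with a block $B'$ of $\partit$. When such a $D$ meets a block $B$ of $\partit$ we must have $B' \cap B \neq \emptyset$, and since $B'$ and $B$ belong to the same partition $\partit$ this forces $B' = B$; hence $D \cap B = B_2 \cap B$. Thus $D \cap B \neq \emptyset$ is equivalent to $B_2 \cap B \neq \emptyset$, and combined with $B_1 \cap B \neq \emptyset$ the hypothesis delivers $B_1 \cap B_2 \cap B \neq \emptyset$, which equals $B_1 \cap D \cap B$, establishing $\partit_1 \bot \partit_2 \vee \partit \vert \partit$.

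The only genuinely delicate point is bookkeeping rather than mathematics: one must respect the paper's non-standard orientation of the partition lattice, so that $\partit' \leq \partit_2$ is read as ``$\partit_2$ refines $\partit'$'' in \textbf{C3}, and one must be sure the convenient characterisation is truly equivalent to the product definition, a fact asserted in the text that I would simply cite. Granting that, the substance is the three short intersection arguments above, with \textbf{C3} (choosing the correct $\partit_2$-block inside $B'$ via a witness point) being the least automatic of them.
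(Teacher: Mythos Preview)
Your argument is correct in every part. Each of the four block-intersection manipulations is sound, and you have handled the paper's non-standard orientation of the partition order correctly in \textbf{C3} (so that $\partit'\leq\partit_2$ indeed makes every $\partit'$-block a union of $\partit_2$-blocks). The use of the join description in \textbf{C4}---that blocks of $\partit_2\vee\partit$ are precisely the non-empty $B_2\cap B'$---is exactly what the paper records in Section~\ref{sec:Questions}, and the observation that $D\cap B\neq\emptyset$ forces $B'=B$ is the clean way to collapse the statement back to the hypothesis.

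As for comparison: the paper does not actually supply a proof of Theorem~\ref{th:QSepOfPart}. It states the result and attributes it to the cited works on quasi-separoids (\cite{kohlas17,kohlasmonney95}), so there is nothing in the text to match your argument against. Your write-up is a self-contained verification via the block-intersection characterisation that the paper states just before the theorem; this is the natural approach and is presumably what the cited references do as well.
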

%A three-place relation like $\partit_1 \bot \partit_2 \vert \partit$ satisfying C1 to C4 has been called a \textit{quasi-separoid} (q-separoid) in \cite{kohlas17}. It is a reduct of a separoid, a concept discussed in \cite{dawid01}, in relation to the concept of (logical) conditional independence in general.
From these properties, it follows that $\partit_x \perp \partit_y \vert \partit_z \iff \partit_{x \vee z} \perp \partit_{y \vee z} \vert \partit_z$,
which we use very often later on.

\section{Information Algebra of Coherent Sets of Gambles} \label{sec:InfAlgs}
In \cite{kohlas21b} we showed that $\Phi$ with the following operations: %Combination is meant to aggregate two or more pieces of information. Extraction filters out the part of a piece of information which is relevant to a given question $x$. 
\begin{enumerate}
\item Combination: $\desirs_1 \cdot \desirs_2 \coloneqq \mathcal{C}(\desirs_1 \cup \desirs_2)$,
\item Extraction: $\epsilon_x(\desirs) \coloneqq \mathcal{C}(\desirs \cap \mathcal{L}_x)$ for $x \in Q$,
\end{enumerate}
is a \emph{domain-free information algebra} that we call the \emph{domain-free information algebra of coherent sets of gambles}. 
Combination captures aggregation of pieces of belief, and extraction describes filtering the part of information relative to a question $x \in Q$. 
Information algebras are particular \emph{valuation algebras} as defined by \cite{shafershenoy90} but with idempotent combination. Domain-free versions of valuation algebras have been proposed by Shafer \cite{shafer91}. Idempotency of combination has important consequences, such as the possibility to define an information order, atoms, approximation, and more \cite{kohlas03,kohlas17}. It also offers---the subject of the present paper---important connections to set algebras.

Here we remind the characterizing properties of the domain-free information algebra $\Phi$ together with a system of questions $Q$ and a family $E$ of extraction operators $\epsilon_x : \Phi \rightarrow \Phi$ for $x \in Q$:
\begin{enumerate}
\item \textit{Semigroup:} $(\Phi,\cdot)$ is a commutative semigroup with a null element $0 = \gambles(\pspace)$ and a unit $1 = \gambles^+(\pspace)$.
\item \textit{Quasi-Separoid:} $(Q,\leq)$ is a join semilattice and $x \bot y \vert z$ with $x,y,z \in Q$, a quasi-separoid.
%\item \textit{Unit and Null:} For all $x \in Q$ we have $\epsilon_x(1) = 1$ and $\epsilon_x(D) = 0$ implies $D = 0$.
\item \textit{Existential Quantifier:} For any $x \in Q$, $\desirs_1,\desirs_2,\desirs \in \Phi$:
\begin{enumerate}
\item $\epsilon_x(0) = 0$,
\item $\epsilon_x(\desirs) \cdot \desirs = \desirs$,
\item $\epsilon_x(\epsilon_x(\desirs_1) \cdot \desirs_2) = \epsilon_x(\desirs_1) \cdot \epsilon_x(\desirs_2)$.
\end{enumerate}
\item \textit{Extraction:} For any $x,y,z \in Q$, $\desirs \in \Phi$, such that $x \vee z \bot y \vee z \vert z$ and $\epsilon_x(\desirs) = \desirs$, we have:
\begin{equation*}
\epsilon_{y \vee z}(\desirs) = \epsilon_{y \vee z}(\epsilon_z(\desirs)).
\end{equation*}
\item \textit{Support:} For any $\desirs \in \Phi$ there is an $x \in Q$ so that $\epsilon_x(\desirs) = \desirs$, i.e. a \emph{support} of $\desirs$ \cite{kohlas21b}, and for all $y \geq x, \; y \in Q$, $\epsilon_y(\desirs) = \desirs$.
\end{enumerate}
When we need to specify all the constructing elements of the domain-free information algebra $\Phi$, we can refer to it with the tuple $(\Phi, \mathcal{Q}, \le, \bot, \cdot, E)$, where $E$ is the family of the extraction operators constructed starting from $x \in Q$ or, equivalently, from partitions in $\mathcal{Q}$.\footnote{When we need to be explicit about partitions, we can indicate the extraction operator $\epsilon_x$ also as $\epsilon_{\mathcal{P}_x}$, where $\mathcal{P}_x \in \mathcal{Q}$ is the partition associated to the question $x \in Q$.} When we do not need this degree of accuracy, we can refer to it simply as $\Phi$. Analogous considerations can be made for other domain-free information algebras.
Notice that, in particular, $(\Phi, \cdot)$ is an idempotent, commutative semigroup. So, a partial order is defined by $\desirs_1 \leq \desirs_2$ if $\desirs_1 \cdot \desirs_2 = \desirs_2$. Then $\desirs_1 \leq \desirs_2$ if and only if $\desirs_1 \subseteq \desirs_2$. This order is called an \textit{information  order} \cite{kohlas21b}.
This definition entails the following facts: $\epsilon_x(\desirs) \leq \desirs$ for every $\desirs \in \Phi, \; x \in Q$; given $\desirs_1, \desirs_2 \in \Phi$, if $\desirs_1 \leq \desirs_2$, then $\epsilon_x(\desirs_1) \leq \epsilon_x(\desirs_2)$ for every $x \in Q$ \cite{kohlas03}.

\section{Atoms and Maximal Coherent Sets of Gambles} \label{sec:Atoms}
%Here we come back to maximal coherent sets of gambles and identify them as \textit{atoms} in the information algebra of coherent sets of gambles. An atom in an information algebra is a maximally informative set, in the information order, different from $0$. In terms of coherent sets of gambles, an atom $M$ is a an element of $\Phi$ so that for all $d \in \Pi$, we have that $M \subseteq D$ implies either $M = D$ or $D = 0 (=\mathcal{L}(\pspace)$. So, obviously, 
Maximal coherent sets $M$ are \emph{atoms} in the information algebra of coherent sets of gambles \cite{kohlas21b}. This is a well-known concept in (domain-free) information algebras. We remind the following elementary properties of atoms \cite{kohlas03}, immediately derivable from the definition. If $M,M_1$ and $M_2$ are atoms of $\Phi$ and $\desirs \in \Phi$, then
\begin{enumerate}
\item $M \cdot \desirs = M$ or $M \cdot \desirs = 0$,
\item either $\desirs \leq M$ or $M \cdot \desirs = 0$,
\item either $M_1 = M_2$ or $M_1 \cdot M_2 = 0$.
\end{enumerate}
We indicate with $At(\Phi)$ the set of all atoms of $\Phi$, and with  $At(\desirs)$ the set of all atoms $M$ which dominate $\desirs \in \Phi$, that is, $At(\desirs) \coloneqq \{M \in At(\Phi):\desirs \subseteq M\}$. %In $At(\desirs)$ there are all atoms which imply $\desirs$, or which are more informative than $\desirs$. 
Furthermore, $\Phi$ is \textit{atomic} \cite{kohlas03}, i.e. for any $\desirs \neq 0$ the set $At(\desirs)$ is not empty, and \emph{atomistic}, i.e. for any $\desirs \not= 0$, $\desirs = \bigcap At(\desirs)$.
%\begin{equation*}
%\desirs = \bigcap At(\desirs).
%\end{equation*}
%So it is also atomic composed \cite{kohlas03} or\textit {atomistic} \cite{kohlasschmid14,kohlasschmid16}. Finally, for any subset $A$ of atoms, we have also that $\bigcap A$ is a coherent set of gambles $\desirs$ such that $A \subseteq At(\desirs)$.  
It is a general result of atomistic information algebras that the subalgebras $\epsilon_x(\Phi)$ are also atomistic \cite{kohlasschmid16}. Moreover, in \cite{kohlas21b}, we showed that $At(\epsilon_x(\Phi)) = \epsilon_x(At(\Phi)) = \{\epsilon_x(M):M \in At(\Phi)\}$ for any $x \in Q$ and, therefore, we call $\epsilon_x(M)$ for $M \in At(\Phi)$ and $x \in Q$ \emph{local atoms} for $x$.  %Applied to our algebra of coherent sets of gambles this gives the following.
%\begin{theorem}
%Let $\Phi$ be the atomistic information algebra of coherent sets of gambles. Then for all $x \in Q$, the subalgebra $\epsilon_x(\Phi)$ is also atomistic and $At(\epsilon_x(\Phi)) = \epsilon_x(At(\Phi)) = \{\epsilon_x(M):M \in At(\Phi)\}$.
%\end{theorem}
%Indeed, they represent maximally informative pieces of information for question $x$ (or partition $P_x$).
%Here we ended the discussion about atoms in $\Phi$ made in \cite{kohlas21b}.
Local atoms $M_x =\epsilon_x(M)$ for $x$ induce a partition $At_x$ of $At(\Phi)$ with blocks $At(M_x)$. If $M $ and $M'$ belong to the same block, we say that $M \equiv_x M'$. 

Let us indicate with $Part(At(\Phi))$ the set of these partitions. As for $Part(\pspace)$, we can introduce a partial order on $Part(At(\Phi))$ defined as: $At_x \le At_y$ if $M \equiv_y M'$ implies $M \equiv_x M'$, for every $M,M' \in At(\Phi)$. $Part(At(\Phi))$ forms a lattice under this order where, in particular, $At_x \vee At_y$ is the partition obtained as the non-empty intersections of blocks of $At_x$ with blocks of $At_y$ \cite{graetzer03}.
We claim moreover that these partitions of $At(\Phi)$ mirror the partitions $\mathcal{P}_x \in \mathcal{Q}$. 
Before stating this main result, we need the following lemma.
\begin{lemma}
Let us consider $M,M' \in At(\Phi)$ and $x \in Q$. Then
\begin{equation*}
    M \equiv_x M' \iff \epsilon_x(M) = \epsilon_x(M') \iff M,M' \in At(\epsilon_x(M)).
\end{equation*}
Hence, $At_x \le At_y$ if and only if $At(\epsilon_x(M)) \supseteq At(\epsilon_y(M))$ for every $M \in At(\Phi)$.
\end{lemma}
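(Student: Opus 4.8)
The plan is to treat the leftmost and rightmost conditions of the chain as essentially definitional and to concentrate the real work on the middle condition $\epsilon_x(M)=\epsilon_x(M')$. First I would record the standing facts about extraction that I will use repeatedly: $\epsilon_x(\desirs)\subseteq\desirs$, the operator is monotone, and it is idempotent (each $x$ is a support of $\epsilon_x(\desirs)$). Together these show that $\epsilon_x(\desirs)$ is in fact the \emph{largest} element with support $x$ that lies below $\desirs$: if $\eta=\epsilon_x(\eta)\le\desirs$, then applying $\epsilon_x$ gives $\eta=\epsilon_x(\eta)\le\epsilon_x(\desirs)$. I would also recall that $\epsilon_x(\desirs)=0$ iff $\desirs=0$, and that the subalgebra $\epsilon_x(\Phi)$ is itself an atomistic information algebra whose atoms are precisely the local atoms $\epsilon_x(M)$, so the elementary atom dichotomy (two atoms either coincide or combine to $0$) is available for local atoms.

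For the equivalence $M\equiv_x M'\iff M,M'\in At(\epsilon_x(M))$: since $\epsilon_x(M)\subseteq M$ we always have $M\in At(\epsilon_x(M))$, so $At(\epsilon_x(M))$ is exactly the block of the partition $At_x$ containing $M$. As the blocks of $At_x$ partition $At(\Phi)$, the relation $M\equiv_x M'$ holds precisely when $M'$ lies in this block, which yields the stated equivalence immediately.

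For $M,M'\in At(\epsilon_x(M))\iff \epsilon_x(M)=\epsilon_x(M')$, the right-to-left implication is trivial, because $\epsilon_x(M)=\epsilon_x(M')\subseteq M'$ places $M'$ in $At(\epsilon_x(M))$ while $M$ is there automatically. The converse is the only genuinely substantial step. Assuming $\epsilon_x(M)\subseteq M'$, I would note that $\epsilon_x(M)$ has support $x$ and invoke the ``largest $x$-supported element below'' characterisation to conclude $\epsilon_x(M)\subseteq\epsilon_x(M')$, i.e.\ $\epsilon_x(M)\cdot\epsilon_x(M')=\epsilon_x(M')$. Since $M'\neq 0$ forces $\epsilon_x(M')\neq 0$, the atom dichotomy applied to the local atoms $\epsilon_x(M),\epsilon_x(M')$ excludes the combination being $0$, leaving $\epsilon_x(M)=\epsilon_x(M')$.

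Finally, the order statement follows by unwinding definitions: $At_x\le At_y$ means $M\equiv_y M'\Rightarrow M\equiv_x M'$ for all $M,M'$, and translating each side through the equivalence $N\equiv_x N'\iff N'\in At(\epsilon_x(N))$ just proved turns this into the block containment $At(\epsilon_y(M))\subseteq At(\epsilon_x(M))$ for every $M$, which is exactly $At(\epsilon_x(M))\supseteq At(\epsilon_y(M))$. I expect the main obstacle to be the converse of the middle equivalence, specifically justifying that $\epsilon_x$ returns the maximal $x$-supported element below its argument and that the atom dichotomy may legitimately be invoked for the local atoms inside $\epsilon_x(\Phi)$; the remainder is bookkeeping with the partition blocks.
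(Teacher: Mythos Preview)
Your proposal is correct and follows essentially the same route as the paper. Both arguments hinge on the same two ingredients: (i) monotonicity plus idempotency of $\epsilon_x$ to pass from $\epsilon_x(M)\le M'$ to $\epsilon_x(M)\le\epsilon_x(M')$, and (ii) the atom dichotomy for local atoms in $\epsilon_x(\Phi)$ to upgrade this inequality to equality; the paper merely phrases step (i) as ``$M,M'\ge M_x$ implies $\epsilon_x(M),\epsilon_x(M')\ge M_x$'' for an a priori unspecified local atom $M_x$, whereas you take $M_x=\epsilon_x(M)$ from the start, and the order statement is handled identically in both.
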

\begin{proof}
If $M \equiv_x M'$, there exists a local atom $M_x$ such that $M,M' \in At(M_x)$. Therefore, $M, M' \ge M_x$ and $\epsilon_x(M), \epsilon_x(M') \ge M_x$ \cite[Lemma~ 15, item 3]{kohlas21b}. However, $\epsilon_x(M), \epsilon_x(M')$ and $M_x$ are all local atoms, hence $\epsilon_x(M)=\epsilon_x(M')= M_x$. The converse is obvious.

For the second part, let us suppose $At(\epsilon_y(M)) \subseteq At(\epsilon_x(M))$ for every $M \in At(\Phi)$, and consider $M',M'' \in At(\Phi)$ such that $M' \equiv_y M''$. Then $M',M'' \in At(\epsilon_y(M'))$ and hence $M',M'' \in At(\epsilon_x(M'))$, which implies $M' \equiv_x M''$. Vice versa, consider $At_x \le At_y$ and $M' \in At(\epsilon_y(M))$ for some $M ,M'\in At(\Phi)$. Then, $M \equiv_y M'$, hence $M \equiv_x M'$ and so $M' \in At(\epsilon_x(M))$.
\end{proof}
%This result is important to show, in the next section, that $At(\Phi)$ forms again a domain-free information algebra. Further, it is fundamental to show that the information algebra $\Phi$ can be embedded in $At(\Phi)$.
%This fact will be helpful in the next section to %It should be clear that these facts could also be expressed in the labeled view of the information algebra.
Now we can state the main result of this section.
\begin{theorem} \label{th:AtomSeparoid}
The map $\mathcal{P}_x \mapsto At_x$, from the lattice of partitions $(Part(\pspace),\leq)$  of $\pspace$ to the lattice partitions  $(Part(At(\Phi)),\leq)$  of $At(\Phi)$, preserves order and join. Furthermore it preserves also conditional independence relations, that is, $\mathcal{P}_x \bot \mathcal{P}_y \vert \mathcal{P}_z$ implies $At_x \bot At_y \vert At_z$.
\end{theorem}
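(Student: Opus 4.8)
The plan is to dispatch the three assertions---preservation of order, of join, and of conditional independence---one at a time, in each case reducing the statement about the induced partitions $At_x$ of $At(\Phi)$ to a statement about the extraction operators $\epsilon_x$ acting on atoms, and then exploiting the characterisation from the preceding lemma, namely $M \equiv_x M' \iff \epsilon_x(M) = \epsilon_x(M')$. For order preservation, suppose $\mathcal{P}_x \leq \mathcal{P}_y$, i.e. $x \leq y$. I would first record the auxiliary fact that $x \leq y$ entails $\mathcal{L}_x \subseteq \mathcal{L}_y$ and hence $\epsilon_x(\epsilon_y(\desirs)) = \epsilon_x(\desirs)$ for all $\desirs \in \Phi$ (a standard property of extraction operators). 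Then, given $M \equiv_y M'$, i.e. $\epsilon_y(M) = \epsilon_y(M')$, applying $\epsilon_x$ to both sides yields $\epsilon_x(M) = \epsilon_x(\epsilon_y(M)) = \epsilon_x(\epsilon_y(M')) = \epsilon_x(M')$, that is $M \equiv_x M'$. By the definition of the order on $Part(At(\Phi))$ this is precisely $At_x \leq At_y$.

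For join preservation I would establish the two inequalities separately. The inequality $At_x \vee At_y \leq At_{x \vee y}$ is immediate from order preservation: since $x,y \leq x \vee y$ we obtain $At_x, At_y \leq At_{x \vee y}$, so $At_{x\vee y}$ is an upper bound of $\{At_x,At_y\}$ and therefore dominates their join. The reverse inequality $At_{x\vee y} \leq At_x \vee At_y$ is the substantial one: unwinding the definition of the refinement $At_x \vee At_y$ together with the preceding lemma, it amounts to showing that for all atoms $M,M'$,
\[
\epsilon_x(M) = \epsilon_x(M') \ \text{and}\ \epsilon_y(M) = \epsilon_y(M') \ \Rightarrow\ \epsilon_{x \vee y}(M) = \epsilon_{x \vee y}(M').
\]
Here I would work with the concrete form $\epsilon_{x\vee y}(M) = \mathcal{C}(M \cap \mathcal{L}_{x\vee y})$ and with the maximality of $M$ and $M'$, aiming to show that any $(x\vee y)$-measurable gamble accepted by $M$ is forced---through the coincidence of the $x$- and $y$-projections, the closure operator, and the avoidance of the status quo---to be accepted by $M'$ as well.

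For conditional independence I would first translate $At_x \bot At_y \vert At_z$ into an algebraic non-vanishing condition. Using the definition of independence of partitions applied to the space $At(\Phi)$ together with atomicity---the intersection $At(M_x) \cap At(M_z)$ of two blocks is nonempty exactly when the combination $M_x \cdot M_z \neq 0$---the relation $At_x \bot At_y \vert At_z$ becomes: for all local atoms $M_x,M_y,M_z$, if $M_x \cdot M_z \neq 0$ and $M_y \cdot M_z \neq 0$ then $M_x \cdot M_y \cdot M_z \neq 0$. I would then derive this from $\mathcal{P}_x \bot \mathcal{P}_y \vert \mathcal{P}_z$ using the conditional-independence machinery of the algebra: the equivalence $\mathcal{P}_x \perp \mathcal{P}_y \vert \mathcal{P}_z \iff \mathcal{P}_{x\vee z} \perp \mathcal{P}_{y\vee z}\vert \mathcal{P}_z$, the quasi-separoid axioms C1--C4 of Theorem~\ref{th:QSepOfPart}, and the Extraction axiom $\epsilon_{y\vee z}(\desirs) = \epsilon_{y\vee z}(\epsilon_z(\desirs))$, so as to manufacture an atom dominating $M_x \cdot M_y \cdot M_z$ out of the two atoms witnessing $M_x \cdot M_z \neq 0$ and $M_y \cdot M_z \neq 0$.

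The main obstacle I anticipate is the reverse inclusion in the join step, equivalently the claim that the $(x\vee y)$-projection of a maximal set is pinned down by its $x$- and $y$-projections. This does not follow formally from order preservation and appears to require the full force of maximality together with the behaviour of $\mathcal{C}$, so this is where I would concentrate the technical effort and where I would expect any hidden hypothesis to be used. By contrast, the conditional-independence step, while laborious, should reduce to essentially bookkeeping once the non-vanishing reformulation and the separoid properties are in place.
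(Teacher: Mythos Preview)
Your handling of order preservation and of conditional independence is essentially the paper's. For order, the paper also derives $At_x\le At_y$ from $\epsilon_x(M)\le\epsilon_y(M)$ via the preceding lemma; for conditional independence, the paper likewise passes to $x\vee z\perp y\vee z\mid z$, takes local atoms $M_{x\vee z},M_{y\vee z},M_z$ whose $At$-blocks meet $At(M_z)$, and then uses the factorisation $\epsilon_z(\desirs_1\cdot\desirs_2)=\epsilon_z(\desirs_1)\cdot\epsilon_z(\desirs_2)$ (available under conditional independence of the supports) together with $\epsilon_z\circ\epsilon_{x\vee z}=\epsilon_z$ to compute $\epsilon_z(M_{x\vee z}\cdot M_{y\vee z}\cdot M_z)=M_z\neq 0$, whence by atomicity an atom dominates the triple product---exactly your plan.

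Where you diverge is the join step. The paper does \emph{not} attack your implication
\[
\epsilon_x(M)=\epsilon_x(M'),\ \epsilon_y(M)=\epsilon_y(M')\ \Longrightarrow\ \epsilon_{x\vee y}(M)=\epsilon_{x\vee y}(M')
\]
head-on. Instead it proves the \emph{converse} of order preservation, $At_x\le At_y\Rightarrow\mathcal{P}_x\le\mathcal{P}_y$, via the chain: $At_x\le At_y$ gives $At(\epsilon_x(M))\supseteq At(\epsilon_y(M))$ for every atom $M$; atomisticity then yields $\epsilon_x(M)\subseteq\epsilon_y(M)$; commutation of extraction with arbitrary meets of coherent sets lifts this to $\epsilon_x(\desirs)\subseteq\epsilon_y(\desirs)$ for \emph{every} coherent $\desirs$; finally, testing on $\desirs=\edesirs(\{f\})$ for $f\in\mathcal{L}_x\setminus(\mathcal{L}_x^+\cup\{g:g\le 0\})$ forces $\mathcal{L}_x\subseteq\mathcal{L}_y$, i.e.\ $x\le y$. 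With both directions in hand the map is an order-isomorphism onto its image, and the paper then invokes the general fact that order-isomorphisms send sups to sups. This order-reflecting step is the idea you are missing, and it is what the paper uses to bypass the obstacle you correctly flagged. Your direct route, by contrast, runs into the real difficulty that $\mathcal{L}_{x\vee y}$ is in general much larger than the linear span of $\mathcal{L}_x\cup\mathcal{L}_y$, so an $(x\vee y)$-measurable gamble cannot be decomposed into pieces controlled by the $x$- and $y$-projections; maximality and the closure operator alone do not obviously close that gap.
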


\begin{proof}
%First of all, notice that  obvious.
%So, $M \equiv_x M'$ if and only if $M,M' \in At(\epsilon_x(M))=At(\epsilon_x(M'))$. Moreover, $At_x \le At_y$ if and only if $At(\epsilon_x(M)) \supseteq At(\epsilon_y(M))$ for every $M \in At(\Phi)$. 
%Indeed, let us consider $M \equiv_y M'$, then $M,M' \in At(\epsilon_y(M))$. However, $At(\epsilon_y(M)) \subseteq At(\epsilon_x(M))$ by hypothesis, hence $M,M' \in At(\epsilon_x(M))$, therefore $M \equiv_x M'$. Vice versa, consider $M' \in At(\epsilon_y(M))$ for some $M \in At(\Phi)$. Then, $M,M' \ge \epsilon_y(M)$, hence $M \equiv_y M'$. Therefore, $M \equiv_x M'$, hence $M' \in At(\epsilon_x(M))$.
%It is possible to notice that $(Part(At(\Phi)), \le)$ . Therefore, we can define an information order on $Part(At(\Phi))$ as: $At_x \le At_y$, if and only if $At(\epsilon_y(M)) \subseteq At(\epsilon_x(M))$, for any $M \in At(\Phi)$. Moreover, $(Part(At(\Phi)), \le)$ is a lattice, where $At_x \vee At_y = At_x \cap At_y$.
%As before we write $x \leq y$ for $P_x \leq P_y$ and $x \vee y$ for $P_x \vee P_y$.
If $x \leq y$, then $\epsilon_x(M) \leq \epsilon_y(M)$ for any atom $M \in At(\Phi)$ \cite[Lemma~ 15, item 4]{kohlas21b}. Therefore, $At(\epsilon_x(M)) \supseteq At(\epsilon_y(M))$ for any $M \in At(\Phi)$, hence $At_x \leq At_y$.
The converse is also true: indeed, if $At_x \leq At_y$,
then  $At(\epsilon_x(M)) \supseteq At(\epsilon_y(M))$ for any $M  \in At(\Phi)$. This implies in particular that $\epsilon_x(M) = \cap At(\epsilon_x(M))  \subseteq \cap At(\epsilon_y(M)) = \epsilon_y(M) $ for any $M  \in At(\Phi)$, thanks to the fact that $At(\Phi)$ is atomistic. Now, for any $D$ coherent, consider the family $\{M_j\}_{j \in J} \coloneqq At(D)$. Then we have:
\begin{equation*}
    \epsilon_x(\desirs) = \epsilon_x(\cap_{j \in J}  M_j) =  \cap_{j \in J} \epsilon_x(M_j) \subseteq \cap_{j \in J}\epsilon_y(M_j) = \epsilon_y( \cap_{j \in J}  M_j) = \epsilon_y(\desirs),
\end{equation*}
thanks to \cite[Theorem~ 17]{kohlas21b}. Therefore we have $\epsilon_x(\desirs) \subseteq \epsilon_y(\desirs)$ also for any $\desirs \in C$. Applying it to $\desirs \coloneqq \edesirs(\{f\})$ for every $f \in \gambles_x \setminus (\gambles^+_x \cup \{ f \in \gambles_x: f \le 0 \})$, we obtain that $\gambles_x \subseteq \gambles_y$, from which it follows that $x \leq y$ \cite[Section~ 3]{kohlas21b}. So the map $\mathcal{P}_x \mapsto At_x$ is an order isomorphism \cite[Def.~1.34]{daveypriestley97}, therefore it also preserves joins \cite[Prop.~2.19]{daveypriestley97}. 

%Now, %it is true that $At(M_{x \vee y}) = At(M_x) \vee  At(M_y)= At(M_x) \cap At(M_y)$ where $M_{x \vee y} = \epsilon_{x \vee y} (M), M_x= \epsilon_x(M), M_y = \epsilon_y(M)$ for $M \in At(\Phi)$. Indeed, 
%consider $M' \in At(\epsilon_{x \vee y}(M))$ for some $M \in At(\Phi)$. Then, $M' \ge \epsilon_{x \vee y} (M) \ge \epsilon_x(M), \epsilon_y(M)$ \cite[Lemma~ 15]{kohlas21b}, therefore $ M' \in At(\epsilon_x(M)) \cap At(\epsilon_y(M))$ and hence $At_{x \vee y} \le At_x \vee At_y$.

%Vice versa, let us consider $M'' \in At(\epsilon_x(M)) \cap At(\epsilon_y(M))$. 
%Then...

%Indeed, both $\epsilon_{x \vee y}(M'')$ and $M_{x \vee y}$ are local atoms so there are two possibilities: $\epsilon_{x \vee y}(M'') \cdot M_{x \vee y}=0$ or  $\epsilon_{x \vee y}(M'') =M_{x \vee y}$. If $\epsilon_{x \vee y}(M'') \cdot M_{x \vee y}=0$, then $\epsilon_{x \vee y}(M \cdot \epsilon_{x \vee y}(M'')) =0$ that means $M \cdot \epsilon_{x \vee y}(M'')=0$.***why is it not possible?*** 
%Therefore, $At_{x \vee y} =  At_x \vee At_y$.
%Consider two local atoms $M_x = \epsilon_x(M)$ and $M_y = \epsilon_y(M)$ so that $At(M_x) \cap At(M_y)$ is not empty, since it contains $M$. Then, we have $M_{x \vee y} = \epsilon_{x \vee y}(M)$, hence $M \in At(M_{x \vee y})$. It follows that $At(M_{x \vee y}) = At(M_x) \cap At(M_y)$, which shows that $A_{x \vee y} = At_x \vee At_y$, again in information order.

For the second part, %we write again $x \bot y \vert z$ for $P_x \bot P_y \vert z$. R
recall that $x \bot y \vert z$ if and only if $x \vee z \bot y \vee z \vert z$. Consider then local atoms $M_{x \vee z},M_{y \vee z}$ and $M_z$ so that
\begin{equation*} 
At(M_{x \vee z}) \cap At(M_z) \not= \emptyset, \quad At(M_{y \vee z}) \cap At(M_z) \not= \emptyset.
\end{equation*}
Hence, there is an atom $M' \in At(M_{x \vee z}) \cap At(M_z)$ and an atom $M''\in At(M_{y \vee z}) \cap At(M_z)$. Therefore, $M_{x \vee z} = \epsilon_{x \vee z}(M')$, $M_{y \vee z} = \epsilon_{y \vee z}(M'')$ and $M_z= \epsilon_z(M')=\epsilon_z(M'')$.
%so that $M_{x \vee z},M_z \leq M'$, hence $M_{x \vee z} \leq \epsilon_{x \vee z}(M')$. Since both $M_{x \vee z}$ and $\epsilon_{x \vee z}(M')$ are local atoms we conclude that $M_{x \vee z} = \epsilon_{x \vee z}(M')$. From $M_z \leq M'$ we obtain in the same way that $M_z = \epsilon_z(M')$. Similarly there is an atom $M'' \in At(M_{y \vee z}) \cap At(M_z)$ and as before we derive $M_{y \vee z} = \epsilon_{y \vee z}(M'')$ and $M_z = \epsilon_z(M'')$. 
Now, thanks to the Existential Quantifier axiom, we have:
\begin{equation*}
\epsilon_z(M_{x \vee z} \cdot M_{y \vee z} \cdot M_z) = \epsilon_z(M_{x \vee z} \cdot M_{y \vee z}) \cdot M_z.
\end{equation*}
Thanks to \cite[Theorem~ 16]{kohlas21b} and \cite[Lemma~ 15, item 6]{kohlas21b},\footnote{\cite[Theorem~ 16, item 2]{kohlas21b} indeed, can be rewritten also as follows: let $\desirs_1, \desirs_2 \in \Phi$ and $x,y,z \in Q$, if  $\desirs_1$ has support $x \vee z$, $\desirs_2$ has support $y \vee z$  and $x \bot y \vert z$ ,then $\epsilon_z(\desirs_1 \cdot \desirs_2) = \epsilon_z(\desirs_1) \cdot \epsilon_z(\desirs_2)$.} we obtain
\begin{align*}
\epsilon_z(M_{x \vee z} \cdot M_{y \vee z}) \cdot M_z  = \epsilon_z(\epsilon_{x \vee z}(M')) \cdot \epsilon_z(\epsilon_{y \vee z}(M'')) \cdot M_z  = \epsilon_z(M') \cdot \epsilon_z(M'') \cdot M_z \neq 0.
\end{align*}
Therefore $M_{x \vee z} \cdot M_{y \vee z} \cdot M_z \neq 0$ \cite[Lemma~ 15, item 2]{kohlas21b} and hence, since the algebra is atomic, there is an atom $M''' \in At(M_{x \vee z} \cdot M_{y \vee z} \cdot M_z)$. Then $M_{x \vee z}, M_{y \vee z}, M_z \le M'''$, whence $M''' \in At(M_{x \vee z}) \cap At(M_{y \vee z}) \cap At(M_z)$ and so
%From $x \vee z \bot y \vee z \vert z$ we obtain \cite[Theorem~ 16]{kohlas21b}:
%\begin{eqnarray*}
%\epsilon_z(M' \cdot M'') = \epsilon_z(M') \cdot \epsilon_z(M'') = M_z \not= 0.
%\end{eqnarray*}
%Therefore we have also $M' \cdot M'' \not= 0$. As the algebra is atomic, there is an atom $M''' \in At(M' \cdot M'')$ so that $M' \cdot M'' \leq M'''$. It follows that $M''' \geq M' \geq \epsilon_{x \vee z}(M') = M_{x \vee z}$, hence $M''' \in At(M_{x \vee z})$. In the same way, from $M''' \geq M''$ we obtain $M''' \in At(M_{y \vee z})$. Finally, $M''' \geq M' \geq \epsilon_z(M') = M_z$ implies $M''' \in At(M_z)$. So we see that $M''' \in At(M_{x \vee z}) \cap At(M_{y \vee z}) \cap At(M_z)$, hence
%\begin{eqnarray*}
%At(M_{x \vee z}) \cap At(M_{y \vee z}) \cap At(M_z) \not= \emptyset.
%\end{eqnarray*}
 $At_x \bot At_y \vert At_z$.
\end{proof}
%This proof shows in particular that
%$M \equiv_x M'$ if and only if $\epsilon_x(M') = \epsilon_x(M)$.
%We shall see in the next Section \ref{sec:SetAlg} that this is the base for an alternative view on the information algebra of coherent sets of gambles. 

%%%%%%%%%%%%%%%%%%%%%%%%%%%%%%%%%%%%%%%%%%%%%%%%%%%%%%%%%%%%
\section{Information Algebras Homomorphisms}\label{sec:homomorphism}
We are interested in homomorphisms between algebras:
%Given the fact that in the whole paper we concentrate ourselves only on domain-free information algebras, we give the definition of information algebras homomorphism only in this case.
\begin{definition}[Domain-free information algebras homomorphism]\label{def:homomorphism}
Let $(\Psi,\mathcal{Q}, \le_{\Psi},\bot_{\Psi},  \cdot_\Psi, E)$ and $(\Psi',\mathcal{Q}', \le_{\Psi'}, \bot_{\Psi'}, \cdot_{\Psi'}, E')$ be two domain-free information algebras, where $E \coloneqq \{ \epsilon_{\mathcal{P}}, \; \mathcal{P} \in \mathcal{Q}\}$ and $E' \coloneqq \{ \epsilon'_{\mathcal{P}'}, \; \mathcal{P}' \in \mathcal{Q}'\}$ are respectively the families of the extraction operators of the two algebras.
%, defined respectively from the sets of partitions $\mathcal{Q}$ and $\mathcal{Q}'$. 
%of partitions $\mathcal{Q}$, $\mathcal{Q}'$. 
A tuple $(f,h,g)$ of maps $f: \Psi \rightarrow \Psi'$, $h: \mathcal{Q} \rightarrow \mathcal{Q}'$ and $g:E \mapsto E'$ defined as $g: \epsilon_{\mathcal{P}} \rightarrow \epsilon'_{h(\mathcal{P})}$, is an homomorphism between $(\Psi,\mathcal{Q}, \le_{\Psi},\bot_{\Psi},  \cdot_\Psi, E)$ and $(\Psi',\mathcal{Q}', \le_{\Psi'}, \bot_{\Psi'}, \cdot_{\Psi'}, E')$ if and only if:
\begin{enumerate}
    \item $f(\psi \cdot_{\Psi} \phi) = f(\psi) \cdot_{\Psi'} f(\phi)$, for every $\phi, \psi \in \Psi$;
    \item $f(0_{\Psi}) = 0_{\Psi'}$ and $f(1_{\Psi})=1_{\Psi'}$, if we indicate with $0_{\Psi},1_{\Psi} $ and $0_{\Psi'},1_{\Psi'} $ respectively, the $0$ and the $1$ elements of $\Psi$ and $\Psi'$;
    \item if $\mathcal{P}_1 \le_{\Psi} \mathcal{P}_2$ then $h(\mathcal{P}_1) \le_{\Psi'} h(\mathcal{P}_2)$, for every $\mathcal{P}_1,\mathcal{P}_2 \in \mathcal{Q}$;
    \item $h(\mathcal{P}_1 \vee_{\Psi} \mathcal{P}_2) = h(\mathcal{P}_1) \vee_{\Psi'} h(\mathcal{P}_2) $ for every $\mathcal{P}_1,\mathcal{P}_2 \in \mathcal{Q}$, if we indicate with $\mathcal{P}_1 \vee_{\Psi} \mathcal{P}_2$, the join of $\mathcal{P}_1,\mathcal{P}_2$ with respect to $\le_{\Psi}$ and with $ h(\mathcal{P}_1) \vee_{\Psi'} h(\mathcal{P}_2)$, the join of $h(\mathcal{P}_1),h(\mathcal{P}_2)$ with respect to $\le_{\Psi'}$;
    \item $\mathcal{P}_1 \bot_{\Psi} \mathcal{P}_2 \vert \mathcal{P}$ implies $h(\mathcal{P}_1) \bot_{\Psi'} h(\mathcal{P}_2) \vert h(\mathcal{P})$ for every $\mathcal{P}_1,\mathcal{P}_2,\mathcal{P} \in \mathcal{Q}$;
    \item $f(\epsilon_{\mathcal{P}}(\psi))= g(\epsilon_\mathcal{P}) (f(\psi))$, for all $\psi \in \Psi$ and $\epsilon_\mathcal{P}\in E$ with $\mathcal{P} \in \mathcal{Q}$.
\end{enumerate}
\end{definition}
 If the maps are one-to-one, then $(\Psi,\mathcal{Q}, \le_{\Psi},\bot_{\Psi},  \cdot_\Psi, E)$ is said to be \emph{embedded} into $(\Psi',\mathcal{Q}', \le_{\Psi'}, \bot_{\Psi'}, \cdot_{\Psi'}, E')$.
If they are also bijiective, the homomorphism is said to be an \emph{isomorphism} between the two algebras. %$(\Psi,\mathcal{Q}, \le_{\Psi},\bot_{\Psi},  \cdot_\Psi, E)$ and $(\Psi',\mathcal{Q}', \le_{\Psi'}, \bot_{\Psi'}, \cdot_{\Psi'}, E')$.

This definition is an extension of the information algebras homomorphism given in \cite{kohlas17},\footnote{ In \cite{kohlas17} the set of questions $Q$ is used in place of the set of partitions $\mathcal{Q}$. Here we need to be more explicit about partitions.} for domain-free information algebras for which $\mathcal{Q}$ is potentially different from $\mathcal{Q}'$. If $\mathcal{Q}=\mathcal{Q}'$, or equivalently $Q=Q'$, it collapses to the simpler definition in \cite{kohlas17}.

\begin{comment}
In case of commutative partitions, the above definition simplifies.

\begin{definition}[Information algebras homomorphism - commutative partitions]
Let $(\Psi,Q, \le_{\Psi},\bot_{\Psi},  \cdot_\Psi, E)$ and $(\Psi',Q', \le_{\Psi'}, \bot_{\Psi'}, \cdot_{\Psi'}, E')$ be two domain-free information algebras where $Q$ and $Q'$ are constituted only by commutative partitions and  where $E \coloneqq \{ \epsilon_x, \; x \in Q\}$ and $E' \coloneqq \{ \epsilon'_{x'}, \; x' \in Q'\}$ are respectively the families of the extractors operators in the two algebras, defined starting respectively from the two sets of questions $Q$ and $Q'$.

A tuple $(f,g)$ of maps $f: \Psi \rightarrow \Psi'$, $g: E \rightarrow E'$, is an homomorphism between $(\Psi,Q, \le_{\Psi},\bot_{\Psi},  \cdot_\Psi, E)$ and $(\Psi',Q', \le_{\Psi'}, \bot_{\Psi'}, \cdot_{\Psi'}, E')$ if and only if:
\begin{enumerate}
    \item $f(\psi \cdot_{\Psi} \phi) = f(\psi) \cdot_{\Psi'} f(\phi)$, for every $\phi, \psi \in \Psi$;
    \item $f(0_{\Psi}) = 0_{\Psi'}$ and $f(1_{\Psi})=1_{\Psi'}$, if we indicate with $0_{\Psi},1_{\Psi} $ and $0_{\Psi'},1_{\Psi'} $ respectively, the $0$ and the $1$ elements of $\Psi$ and $\Psi'$;
    
    \item $g(\epsilon_x \circ \epsilon_) = g(\epsilon_x) \vee_{\Psi'} g(\epsilon_y) $  for every $\epsilon_x, \epsilon_y \in E$;
    \item $f(\epsilon_x(\psi))= g(\epsilon_x) (f(\psi))$, for all $\psi \in \Psi$ and $\epsilon_x \in E$.
\end{enumerate}
\end{definition}
\end{comment}
\section{Set Algebras} \label{sec:SetAlg}

Archetypes of information algebras are so-called set algebras, where the elements are subsets of some universe, combination is intersection, and extraction is related to so-called saturation operators. Starting with the set $\pspace$ of possibilities, representing possible worlds, pieces of information may be given by subsets $S$ of $\pspace$, meaning that the unknown world must be an element of $S$. As before, questions $x \in Q$ are modeled by partitions $\mathcal{P}_x$ or equivalence relation $\equiv_x$, where $\omega \equiv_x \omega'$ means that question $x$ has the same answer in possible worlds $\omega$ and $\omega'$. We first specify the set algebra of subsets of $\pspace$ and show then that this algebra may be embedded into the information algebra of coherent sets. Conversely, we show that the algebra $\Phi$ of coherent sets of gambles may itself be embedded into a set algebra of its atoms, so is, in some precise sense, itself a set algebra. This is a general result for atomistic information algebras \cite{kohlas03,kohlasschmid16}.

To any partition $\mathcal{P}_x$ of $\pspace$ there corresponds a saturation operator defined for any subset $S \subseteq \pspace$ by
\begin{equation}\label{eq:saturOp}
\sigma_x(S) \coloneqq \{\omega \in \pspace: (\exists \omega' \in S)\; \omega \equiv_x \omega'\}.
\end{equation}
The following are well-known properties of saturation operators.

\begin{lemma} \label{le:SatOps}
For all $S,T \subseteq \pspace$ and any partition $\mathcal{P}_x$ of $\pspace$:
\begin{enumerate}
\item $\sigma_x(\emptyset) = \emptyset$,
\item $S \subseteq \sigma_x(S)$,
\item $\sigma_x(\sigma_x(S) \cap T) = \sigma_x(S) \cap \sigma_x(T)$,
\item $\sigma_x(\sigma_x(S)) = \sigma_x(S)$,
\item $S \subseteq T \Rightarrow \sigma_x(S) \subseteq \sigma_x(T)$,
\item $\sigma_x(\sigma_x(S) \cap \sigma_x(T)) = \sigma_x(S) \cap \sigma_x(T)$.
%\item $\sigma_x(S \cup T) = \sigma_x(S) \cup \sigma_x(T)$.
\end{enumerate}
\end{lemma}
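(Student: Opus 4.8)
The plan is to read off all six statements directly from the definition \eqref{eq:saturOp}, exploiting that $\equiv_x$ is an equivalence relation and that $\sigma_x(S)$ is precisely the union of those blocks of $\mathcal{P}_x$ that meet $S$. The first, second, and fifth items require nothing beyond the definition: item~1 holds because there is no $\omega' \in \emptyset$; item~2 follows by taking $\omega' = \omega$ and invoking reflexivity of $\equiv_x$; and item~5 is immediate, since enlarging $S$ to $T$ can only enlarge the set of admissible witnesses $\omega'$.

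For the idempotency statement (item~4) I would argue by double inclusion. The inclusion $\sigma_x(S) \subseteq \sigma_x(\sigma_x(S))$ is just item~2 applied to $\sigma_x(S)$. For the reverse, if $\omega \in \sigma_x(\sigma_x(S))$ then $\omega \equiv_x \omega'$ for some $\omega' \in \sigma_x(S)$, and in turn $\omega' \equiv_x \omega''$ for some $\omega'' \in S$; transitivity of $\equiv_x$ yields $\omega \equiv_x \omega''$, hence $\omega \in \sigma_x(S)$.

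The real content is item~3, $\sigma_x(\sigma_x(S) \cap T) = \sigma_x(S) \cap \sigma_x(T)$, and here too I would prove both inclusions using transitivity. For $\subseteq$, a witness $\omega' \in \sigma_x(S) \cap T$ with $\omega \equiv_x \omega'$ already lies in $\sigma_x(S)$, so chaining equivalences places $\omega$ in $\sigma_x(S)$, while $\omega' \in T$ places $\omega$ in $\sigma_x(T)$. For $\supseteq$, given $\omega \in \sigma_x(S) \cap \sigma_x(T)$, I pick a witness $\omega' \in T$ with $\omega \equiv_x \omega'$; the point is that $\omega'$ must also lie in $\sigma_x(S)$, which follows because $\omega \in \sigma_x(S)$ and $\omega \equiv_x \omega'$ force $\omega'$ into the same block $[\omega]_x$ as some element of $S$. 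Then $\omega' \in \sigma_x(S) \cap T$ witnesses $\omega \in \sigma_x(\sigma_x(S) \cap T)$. Finally, item~6 is a corollary: applying item~3 with $T$ replaced by $\sigma_x(T)$ gives $\sigma_x(\sigma_x(S) \cap \sigma_x(T)) = \sigma_x(S) \cap \sigma_x(\sigma_x(T))$, and item~4 collapses the last factor to $\sigma_x(T)$.

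I do not expect a genuine obstacle here: every claim reduces to reflexivity, symmetry, and transitivity of $\equiv_x$, so the only thing to watch is that, in item~3, I keep track of which witness lies in which set and invoke transitivity in the correct direction. The key conceptual fact underlying all six items is that membership $\omega \in \sigma_x(S)$ depends only on the block $[\omega]_x$, so $\sigma_x(S)$ is $\equiv_x$-saturated; once this is made explicit, each inclusion becomes routine bookkeeping.
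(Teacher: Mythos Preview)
Your proposal is correct; every step reduces cleanly to reflexivity, symmetry, and transitivity of $\equiv_x$, and the bookkeeping in item~3 is handled properly.

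The only structural difference from the paper is the order in which items~3 and~6 are established. The paper first proves item~6 directly (using items~2 and~4) and then invokes items~2,~5,~6 to obtain the inclusion $\sigma_x(\sigma_x(S)\cap T)\subseteq\sigma_x(S)\cap\sigma_x(T)$ in item~3; you instead prove both inclusions of item~3 by hand and then derive item~6 as the special case $T\leadsto\sigma_x(T)$ together with item~4. Both dependency orders are valid and rest on the same underlying observation---that $\sigma_x(S)$ is a union of $\mathcal{P}_x$-blocks---so neither approach gains anything substantive over the other; your route is perhaps marginally more self-contained for item~3, while the paper's route makes item~6 logically prior and reusable.
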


\begin{comment}
\begin{proof}
Items 1, 2 and 4 are obvious. For 3 note that $\sigma_x(T) \supseteq T$, hence $\sigma_x(\sigma_x(S) \cap T) \subseteq \sigma_x(S) \cap \sigma_x(T)$. So consider an element $\omega \in  \sigma_x(S) \cap \sigma_x(T)$, such that for some elements $\omega' \in S$ and $\omega'' \in T$ we have $\omega \equiv_x \omega'$ and $\omega \equiv_x \omega'' $. By transitivity it follows that $\omega'' \equiv_x \omega'$ so that $\omega'' \in \sigma_x(S)$. But then $\omega \equiv_x \omega'' \in \sigma_x(S) \cap T$ implies $\omega \in \sigma_x(\sigma_x(S) \cap T)$ and this proves item 3. Item 4 follows from 3: $\sigma_x(\sigma_x(S)) = \sigma_x(\sigma_x(S) \cap \pspace) = \sigma_x(S) \cap \sigma_x(\pspace) = \sigma_x(S)$. Then item 6 follows $\sigma_x(\sigma_x(S) \cap \sigma_x(T)) = \sigma_x(S) \cap \sigma_x(\sigma_x(T)) = \sigma_x(S) \cap \sigma_x(T)$. Finally, 7. is immediate.
\end{proof}
\end{comment}

\begin{proof}
Items 1, 2, 4, 5 are obvious. 
For item 6, consider $\omega \in \sigma_x(\sigma_x(S) \cap \sigma_x(T))$. Then there is a $\omega' \in \sigma_x(S) \cap \sigma_x(T)$ so that $\omega \equiv_x \omega'$. In particular, $\omega' \in \sigma_x(S)$, hence $\omega \in \sigma_x(\sigma_x(S))=\sigma_x(S)$ by item 4. At the same time, $\omega' \in \sigma_x(T)$, hence $\omega \in \sigma_x(\sigma_x(T))=\sigma_x(T)$. Then $\omega \in \sigma_x(S) \cap \sigma_x(T)$. By item 2 we must then have equality.
Regarding item 3, $\sigma_x(\sigma_x(S) \cap T) \subseteq \sigma_x(S) \cap \sigma_x(T)$ by item 2, 5 and 6. So consider an element $\omega \in  \sigma_x(S) \cap \sigma_x(T)$. Then, there exist $\omega' \in S$ and $\omega'' \in T$ such that $\omega \equiv_x \omega'$ and $\omega \equiv_x \omega'' $. By transitivity it follows that $\omega'' \equiv_x \omega'$ so that $\omega'' \in \sigma_x(S)$. But then $\omega \equiv_x \omega'' \in \sigma_x(S) \cap T$ implies $\omega \in \sigma_x(\sigma_x(S) \cap T)$ and this proves item 3. 
\end{proof}

%Note that the first three items of this theorem say that $\sigma_x$ is an existential quantifier relative to intersection as combination. Further, either we have the top partition whose blocks are single elements of $\pspace$ among the partitions $P_x$ or we consider only subsets $S$ which are saturated relative to a partition $P_x$ for $x \in Q$. In both cases any element of the algebra has a support $x$, $\sigma_x(S) = S$. In the first case this is for any subset $S$ the top partition. If $P_x \leq P_y$, then $\omega \equiv_y \omega'$ implies $\omega \equiv_x \omega'$, so that $\sigma_y(S) \subseteq \sigma_x(S)$. If $x$ is a support of $S$, we have $S \subseteq \sigma_y(S) \subseteq \sigma_x(S) = S$, hence $\sigma_y(S) = S$. So, the support axiom is satisfied. Also if $S$ and $T$ are two subsets of $\pspace$ with support $x$ and $y$ respectively, then they have also both supports $x \vee y$ and this shows also that $S \cap T$ is saturated relative to the partition $P_x \vee P_y = P_{x \vee y}$. So, the family of subsets saturated with respect to some partition $P_x$ for $x \in Q$ is closed under intersection and forms a commutative semigroup with the empty sets as null and $\pspace$ as unit elements and  $(\mathcal{P}(\pspace);\cap)$ is anyway a commutative semigroup. It remains only to verify the extraction property to conclude that the power set of $\pspace$ together with the partitions $P_x$ for $x \in Q$ form an information algebra.

Note that the first three items of this theorem imply that $\sigma_x$ is an existential quantifier relative to intersection as combination. 
This is a first step to construct a domain-free information algebra of subsets of $\pspace$.
Then we limit possible questions to the same join semilattice $(Q, \le)$ considered in the previous sections. Moreover, we consider on it the quasi-separoid three-place relation: $x \perp y \vert z$ with $x,y,z \in Q$, defined before.

Now, we want the support axiom to be satisfied. Hence, if $\partit_\top$ belongs to $Q$, then we have $\sigma_\top(S) = S$ for all $S \subseteq \pspace$. Otherwise, we must limit ourselves to the subsets of $\pspace$ for which there is a support $x \in Q$. We call these sets \emph{saturated} with respect to some $x \in Q$, and we indicate them with $P_Q(\pspace)$ or more simply with $P_Q$ when no ambiguity is possible. Clearly, if the top partition belongs to $Q$, $P_Q(\pspace) = P(\pspace)$, the power set of $\pspace$. So in what follows we can refer more generally to sets in $P_Q(\pspace)$. Note that in particular $\pspace, \emptyset \in P_Q(\pspace)$ for every join semilattice $(Q, \le)$.
At this point the support axiom is satisfied. Indeed, if $x \leq y$ with $x,y \in Q$, then $\omega \equiv_y \omega'$ implies $\omega \equiv_x \omega'$, so that $\sigma_y(S) \subseteq \sigma_x(S)$. Then, if $x$ is a support of $S$, we have $S \subseteq \sigma_y(S) \subseteq \sigma_x(S) = S$, hence $\sigma_y(S) = S$.
Moreover $(P_Q(\pspace),\cap)$ is a commutative semigroup with the empty set as the null element and $\pspace$ as the unit. 
Indeed, the only property we need to prove, is that $P_Q(\pspace)$ is closed under intersection. Then, let us consider $S$ and $T$, two subsets of $\pspace$ with support $x \in Q$ and $y \in Q$ respectively. Then they have also both supports $x \vee y $ that belongs to $Q$, because it is a join semilattice. Therefore, thanks to Lemma \ref{le:SatOps}, we have
\begin{equation*}
\sigma_{x \vee y}(S \cap T) = \sigma_{x \vee y}(\sigma_{x \vee y}(S) \cap \sigma_{x \vee y}(T)) =  \sigma_{x \vee y}(S) \cap \sigma_{x \vee y}(T)= S \cap T.
\end{equation*}
So, $P_Q(\pspace)$ is closed under intersection.
It remains only to verify the extraction property to conclude that $P_Q(\pspace)$ %together with partitions $\partit_x$ for $x \in Q$ and a family of saturation operators $\sigma_x$ with $x \in Q$, 
forms a domain-free information algebra.

\begin{theorem} \label{th:ExtrPropSets}
Given $x,y,z \in Q$, suppose $x \vee z \bot y \vee z\vert z$. Then, for any $S \in P_Q(\pspace)$,
\begin{equation*}
\sigma_{y \vee z}(\sigma_x(S)) = \sigma_{y \vee z}(\sigma_z(\sigma_x(S))).
\end{equation*}
\end{theorem}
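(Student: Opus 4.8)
The plan is to prove the two inclusions separately. The inclusion $\sigma_{y \vee z}(\sigma_x(S)) \subseteq \sigma_{y \vee z}(\sigma_z(\sigma_x(S)))$ is immediate: since $\sigma_x(S) \subseteq \sigma_z(\sigma_x(S))$ by extensivity (item 2 of Lemma~\ref{le:SatOps}), monotonicity (item 5) of $\sigma_{y \vee z}$ delivers the claim. All the work therefore lies in the reverse inclusion. To set things up, I would abbreviate $T \coloneqq \sigma_x(S)$ and record that $T$ is $x$-saturated, i.e. $\sigma_x(T) = T$, by item 4 of Lemma~\ref{le:SatOps}; equivalently, whenever $\omega' \equiv_x \omega_2$ with $\omega_2 \in T$, we have $\omega' \in T$. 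This saturation is the property that will let the hypothesis bite.

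For the reverse inclusion $\sigma_{y \vee z}(\sigma_z(T)) \subseteq \sigma_{y \vee z}(T)$, I would take an arbitrary $\omega \in \sigma_{y \vee z}(\sigma_z(T))$ and unfold the definition of the saturation operator twice: there is an $\omega_1 \in \sigma_z(T)$ with $\omega \equiv_{y \vee z} \omega_1$, and in turn an $\omega_2 \in T$ with $\omega_1 \equiv_z \omega_2$. Since $\omega \equiv_{y \vee z}\omega_1$ entails $\omega \equiv_z \omega_1$ (because $z \le y \vee z$), transitivity of $\equiv_z$ yields $\omega \equiv_z \omega_2$. This is exactly the point at which the hypothesis enters. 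By the witness-form characterisation of conditional independence recalled after its definition, $\mathcal{P}_{x \vee z}\bot\mathcal{P}_{y \vee z}\vert\mathcal{P}_z$ says that $\omega_2 \equiv_z \omega$ implies the existence of an $\omega''$ with $\omega_2 \equiv_{x \vee z}\omega''$ and $\omega \equiv_{y\vee z}\omega''$; here I use the identities $\mathcal{P}_{x \vee z}\vee\mathcal{P}_z = \mathcal{P}_{x\vee z}$ and $\mathcal{P}_{y\vee z}\vee\mathcal{P}_z=\mathcal{P}_{y\vee z}$, valid since $\mathcal{P}_z \le \mathcal{P}_{x\vee z}$ and $\mathcal{P}_z \le \mathcal{P}_{y\vee z}$.

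It then remains to read off the conclusion. From $\omega_2 \equiv_{x \vee z}\omega''$ I extract $\omega_2 \equiv_x \omega''$, and since $T$ is $x$-saturated and $\omega_2 \in T$, this forces $\omega'' \in T$. Combined with $\omega \equiv_{y \vee z}\omega''$ this shows $\omega \in \sigma_{y \vee z}(T)$, which establishes the reverse inclusion and hence the stated equality.

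The main obstacle is matching the roles in the conditional independence relation correctly: the hypothesis $x \vee z \bot y \vee z\vert z$ is \emph{asymmetric}, so one must apply it with $\omega_2$ (the representative living in the $x$-saturated set $T$) placed in the slot joined through $x \vee z$, and $\omega$ in the slot joined through $y \vee z$. Once this pairing is chosen, the $x$-saturation of $T$ does all the remaining work; notably, the $y$-component of $\omega \equiv_{y \vee z}\omega_1$ is not needed directly, as the full $y\vee z$-link between $\omega$ and a point of $T$ is regenerated by the independence step. The only other thing to state cleanly is the passage between the block-intersection form of $\bot$ and the witness-element form, together with the join identities above.
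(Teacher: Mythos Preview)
Your proof is correct and follows essentially the same element-chasing strategy as the paper: establish the easy inclusion by extensivity and monotonicity, and for the hard inclusion unravel membership in $\sigma_{y\vee z}(\sigma_z(\sigma_x(S)))$ and apply the conditional-independence hypothesis to produce a witness landing you in $\sigma_{y\vee z}(\sigma_x(S))$. The only cosmetic differences are that the paper unfolds one more layer (down to an element of $S$ rather than of $T=\sigma_x(S)$) and uses the block-intersection form of $\bot$ together with property~C3 to pass to $x\bot y\vee z\vert z$, whereas you invoke the equivalent witness-element form directly with the joins absorbed; both routes amount to the same argument.
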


\begin{proof}
From $\sigma_z(\sigma_x(S)) \supseteq \sigma_x(S)$ we obtain $\sigma_{y \vee z}(\sigma_z(\sigma_x(S)) \supseteq \sigma_{y \vee z}(\sigma_x(S))$. Consider therefore an element $\omega \in \sigma_{y \vee z}(\sigma_z(\sigma_x(S)))$. Then there are elements $\mu,\mu'$ and $\omega'$ so that $\omega \equiv_{y \vee z} \mu \equiv_z \mu' \equiv_x \omega'$ and $\omega' \in S$. This means that $\omega,\mu$ belong to some block $B_{y \vee z}$ of partition $\mathcal{P}_{y \vee z}$, $\mu,\mu'$ to some block $B_z$ of partition $\mathcal{P}_z$ and $\mu',\omega'$ to some block $B_x$ of partition $\mathcal{P}_x$. It follows that $B_x \cap B_z \not= \emptyset$ and $B_{y \vee z} \cap B_z \not= \emptyset$. Then $x \vee z \bot y \vee z \vert z$ implies, thanks to properties of a separoid, that $x \bot y \vee z \vert z$. Therefore, we have $B_x \cap B_{y \vee z} \cap B_z \not= \emptyset$, and in particular, $B_x \cap B_{y \vee z} \not= \emptyset$. So there is a $\lambda \in B_x \cap B_{y \vee z}$ such that
$\omega \equiv_{y \vee z} \lambda \equiv_x \omega' \in S$, hence $\omega \in \sigma_{y \vee z}(\sigma_x(S))$. So we have $\sigma_{y \vee z}(\sigma_x(S)) = \sigma_{y \vee z}(\sigma_z(\sigma_x(S)))$.
\end{proof}

%Recall that $x \bot y \vert z$ is equivalent to $x \vee z \bot y \vee z \vert z$, so that according to this theorem saturation operators satisfy the Extraction axiom.
Hence, these algebras of sets, with intersection as combination and saturation as extraction, form domain-free information algebras. Such algebras will be called \textit{set algebras}.
A set algebra of subsets of $\pspace$ can be embedded in the information algebra of coherent sets of gambles defined on $\pspace$. For any set $S \in P_Q(\pspace)$, define
\begin{equation*}
\desirs_S \coloneqq \{f \in \mathcal{L}(\pspace): \inf_{\omega \in S} f(\omega) >0 \} \cup \mathcal{L}^+(\pspace). %\edesirs(\gambles_S^+ ) = \{f \in \mathcal{L}(\pspace):f|S \ge 0, \; f|S \neq 0 \} \cup \mathcal{L}(\pspace)^+.
\end{equation*}
If $S \neq \emptyset$, this is clearly a coherent set. %even a strictly desirable set of gambles. 
The next theorem shows that the map $S \mapsto \desirs_S$ together with the map $\sigma_x \mapsto \epsilon_x$ is an information algebra homomorphism, according to the simpler definition given in \cite{kohlas17}. It can be applied in fact, because in this case the set of partitions/questions analyzed by the two information algebras is the same.

\begin{theorem} \label{th:InfAlgHom}
Let $S,T \in P_Q(\pspace)$ and $x \in Q$. Then
\begin{enumerate}
\item $\desirs_S \cdot \desirs_T = \desirs_{S \cap T}$,
\item $\desirs_\emptyset = \mathcal{L}(\pspace)$, $\desirs_\pspace = \mathcal{L}^+(\pspace)$,
\item $\epsilon_x(\desirs_S) = \desirs_{\sigma_x(S)}$.
\end{enumerate}

\end{theorem}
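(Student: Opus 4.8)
The plan is to verify the three identities, which together express that the pair $(S \mapsto \desirs_S,\ \sigma_x \mapsto \epsilon_x)$ satisfies the defining conditions of an information algebra homomorphism. Throughout I will lean on two facts about the closure operator. First, $\desirs_S \in \Phi$ for every $S$ (it is coherent when $S \neq \emptyset$, and equals $\gambles$ when $S = \emptyset$), so that $\mathcal{C}(\desirs')$, being the smallest member of $\Phi$ containing $\desirs'$, is contained in any $\desirs_U \in \Phi$ that already contains $\desirs'$. Second, $\mathcal{C}(\desirs') \supseteq \posi(\desirs' \cup \gambles^+) = \edesirs(\desirs')$, which lets me certify membership in the closure by exhibiting a positive combination. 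Item 2 is immediate: with the convention $\inf_\emptyset f = +\infty$, every gamble lies in $\desirs_\emptyset$, so $\desirs_\emptyset = \gambles$; and since $\inf_\pspace f > 0$ forces $f \in \gambles^+$, the set $\{f : \inf_\pspace f > 0\}$ is already inside $\gambles^+$, whence $\desirs_\pspace = \gambles^+$.

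For item 1 the inclusion $\desirs_S \cdot \desirs_T \subseteq \desirs_{S \cap T}$ is the easy half: monotonicity of the infimum over the smaller set gives $\desirs_S \cup \desirs_T \subseteq \desirs_{S \cap T}$, and minimality of $\mathcal{C}$ closes the argument since $\desirs_{S \cap T} \in \Phi$. The reverse inclusion is the first real step. When $S \cap T \neq \emptyset$, I take $f \in \desirs_{S \cap T}$ with $m \coloneqq \inf_{S \cap T} f > 0$, split $\pspace$ into the four regions $S \cap T$, $S \setminus T$, $T \setminus S$ and the complement of $S \cup T$, and define $g,h$ region by region so that $f = g + h$, with $g$ bounded away from $0$ on all of $S$ and $h$ bounded away from $0$ on all of $T$: for instance $g = h = f/2$ on $S \cap T$, $g$ equal to a fixed $\delta > 0$ on $S \setminus T$ with $h = f - \delta$, symmetrically on $T \setminus S$, and a harmless split on the complement. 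This exhibits $f = g + h \in \posi(\desirs_S \cup \desirs_T) \subseteq \mathcal{C}(\desirs_S \cup \desirs_T)$. When $S \cap T = \emptyset$ the target is $\desirs_\emptyset = \gambles$, so instead I force $0$ into the natural extension: taking $g = 1$ on $S$ and $-1$ off $S$, and $h = 1$ on $T$ and $-1$ off $T$, gives $g \in \desirs_S$, $h \in \desirs_T$ and $g + h \le 0$ with equality on $S \cup T$, so either $g + h = 0$ directly or $-(g+h) \in \gambles^+$; in both cases $0 \in \edesirs(\desirs_S \cup \desirs_T)$, forcing $\mathcal{C}(\desirs_S \cup \desirs_T) = \gambles$.

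For item 3, write $\epsilon_x(\desirs_S) = \mathcal{C}(\desirs_S \cap \mathcal{L}_x)$. The inclusion into $\desirs_{\sigma_x(S)}$ uses $x$-measurability: if $f \in \desirs_S \cap \mathcal{L}_x$ is not in $\gambles^+$, then $f$ is constant along each block of $\partit_x$, so each $\omega \in \sigma_x(S)$ shares its value with some $\omega' \in S$, giving $\inf_{\sigma_x(S)} f = \inf_S f > 0$; since $\desirs_{\sigma_x(S)} \in \Phi$, minimality of $\mathcal{C}$ finishes. The reverse inclusion is the second real step and the place I expect most of the work. Given $f$ with $\inf_{\sigma_x(S)} f > 0$, I pass to its $x$-measurable lower envelope $g(\omega) \coloneqq \inf\{f(\omega') : \omega' \equiv_x \omega\}$, which is bounded (hence a gamble), $x$-measurable by construction, and satisfies $g \le f$ everywhere. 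The key verification is that $g \in \desirs_S$: for $\omega \in S$ every $\omega' \equiv_x \omega$ lies in $\sigma_x(S)$, so $g(\omega) \ge \inf_{\sigma_x(S)} f > 0$, giving $\inf_S g > 0$. Then $f = g + (f - g)$ with $g \in \desirs_S \cap \mathcal{L}_x$ and $f - g \in \gambles^+ \cup \{0\}$, so $f \in \edesirs(\desirs_S \cap \mathcal{L}_x) \subseteq \mathcal{C}(\desirs_S \cap \mathcal{L}_x)$.

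The main obstacle is pinning down the two explicit constructions---the region-wise decomposition for item 1 and the $x$-measurable minorant $g$ for item 3---and checking that the relevant infima stay strictly positive so the constructed pieces genuinely land in $\desirs_S$, $\desirs_T$ or $\desirs_S \cap \mathcal{L}_x$. A secondary but necessary care point is the handling of the degenerate cases $S \cap T = \emptyset$ in item 1 and $S = \emptyset$ in item 3, where the image is the top element $\gambles$ and membership must be certified by producing $0$ in the natural extension rather than by a direct positive decomposition.
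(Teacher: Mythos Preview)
Your argument is correct and, for items 1 and 2, matches the paper's proof essentially line for line: the same four-region decomposition $f=f_1+f_2$ when $S\cap T\neq\emptyset$, and the same idea of forcing $0\in\edesirs(\desirs_S\cup\desirs_T)$ when $S\cap T=\emptyset$ (the paper uses a slightly different pair $\tilde f,\tilde g$ with $\tilde f+\tilde g=0$ identically, but the mechanism is the same).

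For item 3 you take a cleaner route than the paper. The paper first invokes that $\desirs_{\sigma_x(S)}$ is strictly desirable to obtain $\delta>0$ with $f-\delta\in\desirs_{\sigma_x(S)}$, then sets $g(\omega)\coloneqq\inf_{\omega'\equiv_x\omega}f(\omega')-\delta$, argues only $\inf_S g\geq 0$, and adds $\delta$ back to land in $\desirs_S\cap\gambles_x$. You bypass strict desirability entirely: your $g(\omega)=\inf_{\omega'\equiv_x\omega}f(\omega')$ already satisfies $\inf_S g\geq \inf_{\sigma_x(S)} f>0$ because the $\equiv_x$-class of any $\omega\in S$ is contained in $\sigma_x(S)$. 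This is a genuine simplification---it avoids an external structural fact about $\desirs_{\sigma_x(S)}$ and works directly from the definitions.
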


\begin{proof}
1. Note that $D_S= \gambles^+$ or $D_T= \gambles^+$ if and only if $S=\pspace$ or $T= \pspace$. Clearly in this case we have immediately the result.
The same is true if $D_S= \gambles$ or $D_T= \gambles$, which is equivalent to have $S= \emptyset $ or $T= \emptyset$. Now suppose $D_S,D_T \neq \gambles^+$ and
$D_S,D_T \neq \gambles$.
If $S \cap T = \emptyset$, then $\desirs_{S \cap T} = \mathcal{L}(\pspace)$. %And  by definition $\desirs_S \cdot \desirs_T = \mathcal{C}(\desirs_S \cup \desirs_T)$. 
Consider $f \in \desirs_S$ and $g \in \desirs_T$. Since $S$ and $T$ are disjoint, we have $\tilde{f} \in \desirs_S$ and $\tilde{g} \in \desirs_T$, where $\tilde{f}, \; \tilde{g}$ are defined in the following way:
 \begin{align*}
\tilde{f}(\omega) \coloneqq \left\{ \begin{array}{ll} f(\omega) & \textrm{for}\ \omega \in S, \\ -g(\omega) & \textrm{for}\ \omega \in T, \\ 0 & \textrm{for}\ \omega \in  (S\cup T)^c,\end{array} \right. &&
\tilde{g}(\omega) \coloneqq \left\{ \begin{array}{ll} -f(\omega) & \textrm{for}\ \omega \in S , \\ g(\omega) & \textrm{for}\ \omega \in T, \\ 0 & \textrm{for}\ \omega \in (S\cup T)^c. \end{array} \right.
\end{align*}
However, $\tilde{f} + \tilde{g} = 0 \in \mathcal{E}(\desirs_S \cup \desirs_T)$, hence $\desirs_S \cdot \desirs_T = \mathcal{L}(\pspace) = \desirs_{S \cap T}$.
Assume then that $S \cap T \not= \emptyset$. Note that $\desirs_S \cup \desirs_T \subseteq \desirs_{S \cap T}$ so that $\desirs_S \cdot \desirs_T$ is coherent and $\desirs_S \cdot \desirs_T \subseteq \desirs_{S \cap T}$. Consider then a gamble $f \in \desirs_{S \cap T}$. Select a $\delta > 0$ and define two functions
\begin{align*}
f_1(\omega) \coloneqq \left\{ \begin{array}{ll} 1/2f(\omega) & \textrm{for}\ \omega \in (S \cap T), \\ \delta & \textrm{for}\ \omega \in S \setminus T, \\ f(\omega) - \delta & \textrm{for}\ \omega \in T \setminus S,
\\ 1/2f(\omega) & \textrm{for}\ \omega \in (S \cup T)^c, \end{array} \right.
&&
f_2(\omega) \coloneqq \left\{ \begin{array}{ll} 1/2f(\omega) & \textrm{for}\ \omega \in (S \cap T), \\ f(\omega) - \delta & \textrm{for}\ \omega \in S \setminus T,
\\ \delta & \textrm{for}\ \omega \in T \setminus S, \\
1/2f(\omega) & \textrm{for}\ \omega \in (S \cup T)^c . \end{array} \right.
\end{align*}
Then $f = f_1 + f_2$ and $f_1 \in \desirs_S$, $f_2 \in \desirs_T$. Therefore $f \in \edesirs(\desirs_S \cup \desirs_T)= \mathcal{C}(\desirs_S \cup \desirs_T) \eqqcolon  \desirs_S \cdot \desirs_T$, hence $\desirs_S \cdot \desirs_T = \desirs_{S \cap T}$. 

2. Both have been noted above. 

3. First of all it can be noticed that, if $S \in P_Q(\pspace)$, then $\sigma_x(S) \in P_Q(\pspace)$. So $\desirs_{\sigma_x(S)}$ is well defined. Furthermore, if $S$ is empty, then $\epsilon_x(\desirs_\emptyset) = \mathcal{L}(\pspace)$ so that item 3 holds in this case. Hence, assume $S \not= \emptyset$. Then we have
\begin{equation*}
\epsilon_x(\desirs_S) \coloneqq \mathcal{C}(\desirs_S \cap \mathcal{L}_x) = \posi(\mathcal{L}^+(\pspace) \cup (\desirs_S \cap \mathcal{L}_x)). 
\end{equation*}
Consider a gamble $f \in \desirs_S \cap \mathcal{L}_x$. We have $\inf_S f > 0$ and $f$ is $x$-measurable. If $\omega \equiv_x \omega'$ for some $\omega' \in S$ and $\omega \in \pspace$, then $f(\omega) = f(\omega')$. Therefore $\inf_{\sigma_x(S)} f = \inf_S f > 0$, hence $f \in D_{\sigma_x(S)}$. Then $\mathcal{C}(\desirs_S \cap \mathcal{L}_x) \subseteq \mathcal{C}(\desirs_{\sigma_x(S)})=\desirs_{\sigma_x(S)}$.
Conversely, consider a gamble $f \in \desirs_{\sigma_x(S)}$. $\desirs_{\sigma_x(S)}$ is a strictly desirable set of gambles.\footnote{$\lpr(f) \coloneqq \inf_S(f)$ for every $f \in \gambles$ with $S \neq \emptyset$, is a coherent lower prevision \cite{walley91}.} Hence, if $f \in \desirs_{\sigma_x(S)}$, $f \in \mathcal{L}^+(\pspace)$ or there is $\delta >0$ such that $f-\delta \in \desirs_{\sigma_x(S)}$. If $f \in \mathcal{L}^+(\pspace)$, then $f \in \epsilon_x(\desirs_S)$. Otherwise, let us define for every $\omega \in \pspace$, $g(\omega) \coloneqq \inf_{\omega' \equiv_x  \omega} f(\omega') - \delta$.
If $\omega \in S$, then $g(\omega) > 0$ since $\inf_{\sigma_x(S)} (f- \delta) > 0$. So we have $\inf_S g \ge 0$ and $g$ is $x$-measurable. However, $ \inf_S ( g + \delta) = \inf_S g + \delta >0$ hence $(g + \delta) \in \desirs_S \cap \mathcal{L}_x$ and $f \geq g+ \delta$. Therefore %***original: $g \in \mathcal{C}(D_A \cap \mathcal{L}_S)$***new version 
$f \in \mathcal{C}(\desirs_S \cap \mathcal{L}_x)$.

\end{proof}

%Let us call $\tilde{\Phi} \coloneqq \{ D_S : S \in Q\}$. You can notice that, $(\tilde{\Phi}, Q, \le, \bot \cdot, \{\epsilon_x\}_{x \in Q}\}$ is an information algebra.
%Indeed, clearly $\tilde{\Phi}$ satisfies axioms 1,2,3,4 that characterize a domain-free information algebra. Regarding the support axiom, you can notice that if $S \in P_Q(\pspace)$, $D_S$ has support in $x$. So it is satisfied.

%This theorem therefore, shows that $(f,g,h)$ where $f: P_Q(\pspace) \mapsto \tilde{\Phi}$, $g:Q \mapsto Q$, $h: \{\epsilon_x\}_{x \in Q} \mapsto  \{\sigma_x\}_{x \in Q} $ are defined as $ f: S \rightarrow \desirs_S$, $g: x \rightarrow x$ and $h: \epsilon_x \rightarrow \sigma_x$, is an information algebra homomorphism (or, following the simpler definition in \cite{kohlas17}, $f$ is an homomorphism). Since $f$,$g$ and $h$ are also obviously one-to-one (or simply $f$ is one-to-one), it is in particular an embedding of the set algebra $P_Q(\pspace)$ into $\tilde{\Phi}$. Further $\tilde{\Phi}$ is a subalgebra of $\Phi$, therefore it is also an embedding of $P_Q(\pspace)$ into the information algebra of coherent sets.
Item 3. guarantees that, if $S \in P_Q(\pspace)$, then there exists an $x \in Q$ such that $x$ is a support of $D_S$. Notice moreover that the two maps are one-to-one, therefore it is in particular an embedding of the set algebra $P_Q(\pspace)$ into $\Phi(\pspace)$. 

\begin{comment}
\begin{theorem} \label{th:LattHomom}
The map $S \mapsto \desirs_S$ is a lattice homomorphism. 
\end{theorem}

\begin{proof}
The map preserves (finite) joins by Theorem \ref{th:InfAlgHom}. Consider two subsets $S$ and $T$ of $\pspace$ and a gamble $f \in \desirs_{S \cup T}$. Then either $f \in \mathcal{L}(\pspace)^+$ or $\inf_{s \cup T} f > 0$. In the first case $f \in \desirs_S \cap \desirs_T$. in the second case it follows from $\inf_{S \cup T}f = \min\{\inf_S f,\inf_T f\}$ that $\inf_S f > 0$ and $\inf_T f > 0$, hence $f \in \desirs_S \cap \desirs_T$. Conversely, from $f \in \desirs_S \cap \desirs_T$ we have either $f \in \mathcal{L}(\pspace)^+$ or $\inf_{S \cup T}f = \min\{\inf_S f,\inf_T f\} > 0$, hence in both cases $f \in \desirs_{S \cup T}$.
\end{proof}

It could be conjectured that the map $S \mapsto \desirs_S$ preserves also arbitrary joins and meets. But this is at this point still an open question. Anyway, the image of the map, the set $\Phi_S = \{\desirs_S:S \subseteq \pspace\}$ is a subalgebra of the information algebra $\Phi$ of coherent sets of gambles. %or more precisely, a subalgebra of the subalgebra of strictly coherent sets of gambles. 
Since the subsets of $\pspace$ form a Boolean lattice, the same holds for the $\desirs_S$.% In particular, $\desirs_{S^c}$ is the complement of $\desirs_S$, that is $\desirs_S \vee \desirs_{S^c} = \desirs_S \cdot \desirs_{S^c} = \mathcal{L}(\pspace)$ and $\desirs_S \cap \desirs_{S^c} = \desirs_\pspace = \mathcal{L}(\pspace)^+$ (recall that in the information order $\mathcal{L}(\pspace)$ is the greatest and $\mathcal{L}(\pspace)^+$ the least element of the lattice).  
\end{comment}

Next we construct a set algebra of subsets of $At(\Phi)$. For this purpose we consider %define for any $x \in Q$ equivalence relations $\equiv_x$ in $At(\Phi)$ by
%\begin{equation*}
%M \equiv_x M' \Leftrightarrow \epsilon_x(M) = \epsilon_x(M').
%\end{equation*}
%We denote the associated partitions of $At(\Phi)$ by $Q_x$ and the related saturation operator by $\sigma_x$. By Theorem \ref{th:AtomSeparoid} and Theorem \ref{th:ExtrPropSets} applied to subsets of $At(\Phi)$ we see that the subsets of atoms of $\Phi$ form an information algebra, that is, a set algebra with intersection as combination and saturation relative to partitions $Q_x$ as extraction.
%We noticed in Section \ref{sec:Atoms} that the associated partitions of $At(\Phi)$ coincide with $At_x$.  We denote 
the set of partitions $At_x$ with $x \in Q$. We denote them as $Part_Q(At(\Phi))$. Moreover, we indicate with $\sigma_x$ the related saturation operators defined similarly to \eqref{eq:saturOp}, and with $At_Q(\Phi)$ the subsets of $At(\Phi)$ saturated with respect to some $At_x \in Part_Q(At(\Phi))$. %We denote moreover the related saturation operator by . 
By %Theorem \ref{th:AtomSeparoid} and 
Theorem \ref{th:AtomSeparoid} restricted to $\mathcal{P}_x$ with $x \in Q$, it is possible to derive that, if $(Q, \le) $ is a join semilattice, then $(Part_Q(At(\Phi)), \le)$ is also a join semilattice with $At_x \bot At_y \vert At_z$ a quasi-separoid \cite[Theorem 2.6]{kohlas17}.
So, thanks to Lemma \ref{le:SatOps} and the reasoning below, also $At_Q(\Phi)$ is a set algebra with intersection as combination and saturation relative to partitions $At_x$ as extraction. Moreover, thanks again to Theorem \ref{th:AtomSeparoid}, we know that $h: \mathcal{P}_x \mapsto At_x$, satisfies items 3,~4 and~5 of Definition~\ref{def:homomorphism}. Therefore, we need only an analog of Theorem~\ref{th:InfAlgHom} for $f: D \mapsto At(D)$ and $g: \epsilon_x \mapsto \sigma_x$, to conclude that $(f,h,g)$ is an information algebra homomorphism between $\Phi$ and $At_Q(\Phi)$.

\begin{theorem} \label{th:EmbedInSetAlg}
For any element $\desirs_1,\desirs_2$ and $\desirs$ of $\Phi$ and all $x \in Q$,
\begin{enumerate}
\item $At(\desirs_1 \cdot \desirs_2) = At(\desirs_1) \cap At(\desirs_2)$,
\item $At(\mathcal{L}(\pspace)) = \emptyset$, $At(\mathcal{L}(\pspace)^+) = At(\Phi)$,
\item $At(\epsilon_x(\desirs)) = \sigma_x(At(\desirs))$.
\end{enumerate}
\end{theorem}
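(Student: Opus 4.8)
The plan is to handle the three items separately, obtaining items~1 and~2 straight from the definition of $At(\cdot)$ and concentrating the work on item~3. For item~2, no atom can contain $0=\mathcal{L}(\pspace)$ because atoms are coherent and $0\notin M$, so $At(\mathcal{L}(\pspace))=\emptyset$; and every coherent set contains the unit $\mathcal{L}^+(\pspace)$ by~\ref{D1}, so every atom dominates it and $At(\mathcal{L}(\pspace)^+)=At(\Phi)$. For item~1, each atom $M$ satisfies $M=\mathcal{C}(M)$, hence $\desirs_1\cdot\desirs_2=\mathcal{C}(\desirs_1\cup\desirs_2)\subseteq M$ iff $\desirs_1\cup\desirs_2\subseteq M$ iff $\desirs_1\subseteq M$ and $\desirs_2\subseteq M$; this is precisely $M\in At(\desirs_1)\cap At(\desirs_2)$, and the same set-theoretic argument subsumes the degenerate case $\desirs_1\cdot\desirs_2=0$ automatically.

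For item~3 I would first note that if $\desirs=0$ both sides are empty, using $\epsilon_x(0)=0$ and $\sigma_x(\emptyset)=\emptyset$ (Lemma~\ref{le:SatOps}), and then assume $\desirs\neq0$. Here $M\in\sigma_x(At(\desirs))$ means exactly that some $M'\in At(\desirs)$ has $M\equiv_x M'$, i.e. $\epsilon_x(M)=\epsilon_x(M')$. The inclusion $\sigma_x(At(\desirs))\subseteq At(\epsilon_x(\desirs))$ is the easy one: from $\desirs\subseteq M'$ and monotonicity of $\epsilon_x$ we get $\epsilon_x(\desirs)\subseteq\epsilon_x(M')=\epsilon_x(M)\subseteq M$, so $\epsilon_x(\desirs)\subseteq M$ and $M\in At(\epsilon_x(\desirs))$.

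The reverse inclusion $At(\epsilon_x(\desirs))\subseteq\sigma_x(At(\desirs))$ carries the whole difficulty, and its crux is a non-emptiness claim. Take an atom $M$ with $\epsilon_x(\desirs)\subseteq M$; idempotence and monotonicity of $\epsilon_x$ yield $\epsilon_x(\desirs)\subseteq\epsilon_x(M)$, so $\epsilon_x(\desirs)\cdot\epsilon_x(M)=\epsilon_x(M)$. The key step, which I expect to be the main obstacle, is to show $\desirs\cdot\epsilon_x(M)\neq0$: the Existential Quantifier axiom gives $\epsilon_x(\desirs\cdot\epsilon_x(M))=\epsilon_x(\desirs)\cdot\epsilon_x(M)=\epsilon_x(M)\neq0$ (since $M\neq0$), and because $\epsilon_x(0)=0$ this forces $\desirs\cdot\epsilon_x(M)\neq0$. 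Atomicity of $\Phi$ then supplies an atom $M'\supseteq\desirs\cdot\epsilon_x(M)$, so $\desirs\subseteq M'$ (hence $M'\in At(\desirs)$) and $\epsilon_x(M)\subseteq M'$, i.e. $M'\in At(\epsilon_x(M))$. By the lemma preceding Theorem~\ref{th:AtomSeparoid}, $M'\in At(\epsilon_x(M))$ forces $\epsilon_x(M)=\epsilon_x(M')$, that is $M\equiv_x M'$ with $M'\in At(\desirs)$, whence $M\in\sigma_x(At(\desirs))$. Together with Theorem~\ref{th:AtomSeparoid}, this shows that $(f,h,g)$ given by $f:\desirs\mapsto At(\desirs)$, $h:\mathcal{P}_x\mapsto At_x$ and $g:\epsilon_x\mapsto\sigma_x$ is an information algebra homomorphism, and the injectivity of $f$ (immediate from atomisticity) upgrades it to an embedding of $\Phi$ into $At_Q(\Phi)$.
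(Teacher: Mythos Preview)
Your proof is correct and follows essentially the same route as the paper: items~1 and~2 are handled by unwinding the definition of $At(\cdot)$ and the information order, and item~3 hinges on the same non-emptiness argument $\epsilon_x(\desirs\cdot\epsilon_x(M))=\epsilon_x(\desirs)\cdot\epsilon_x(M)\neq 0$ via the Existential Quantifier axiom, followed by atomicity to produce the witness $M'$. The one place where you diverge is the final step of the hard inclusion: the paper re-derives $\epsilon_x(M)=\epsilon_x(M')$ directly from atom properties (showing $\epsilon_x(M)\cdot\epsilon_x(M')\neq 0$ and then using maximality twice), whereas you invoke the lemma preceding Theorem~\ref{th:AtomSeparoid} to conclude immediately from $M'\in At(\epsilon_x(M))$; your shortcut is legitimate and cleaner, since that lemma has already been established.
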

 
\begin{proof}
Item 2 is obvious. 
%If $\desirs_1 \cdot \desirs_2 = 0$, then $At(\desirs_1 \cdot \desirs_2) = \emptyset$. But then $At(\desirs_1) \cap At(\desirs_2) = \emptyset$ too, since otherwise there is an atom $M \in At(\desirs_1) \cap At(\desirs_2)$ and $\desirs_1,\desirs_2 \leq M$ implies $\desirs_1 \cdot \desirs_2 \leq M$ and $M \in At(\desirs_1 \cdot \desirs_2) \not= \emptyset$ against the assumption. And, vice versa, if $At(\desirs_1) \cap At(\desirs_2) = \emptyset$, then $At(\desirs_1 \cdot \desirs_2) = \emptyset$ too because otherwise there would by an atom $M \geq \desirs_1 \cdot \desirs_2 \geq \desirs_1,\desirs_2$. 
If %$\desirs_1 \cdot \desirs_2 \not= 0$ then 
there is a an atom $M \in At(\desirs_1 \cdot \desirs_2)$, then $M \geq \desirs_1 \cdot \desirs_2 \geq \desirs_1,\desirs_2$ and thus  $M \in At(\desirs_1)$ and $M \in At(\desirs_2)$, hence $M \in At(\desirs_1) \cap At(\desirs_2)$. Conversely, if $M \in At(\desirs_1) \cap At(\desirs_2)$, then $\desirs_1,\desirs_2 \leq M$, hence $\desirs_1 \cdot \desirs_2 \leq M$ and $M \in At(\desirs_1 \cdot \desirs_2)$. This shows that $At(\desirs_1 \cdot \desirs_2) = At(\desirs_1) \cap At(\desirs_2)$.

Furthermore, if $\epsilon_x(\desirs) = 0$, then $\desirs = 0$ and $At(\desirs) = \emptyset$, hence $\sigma_x(\emptyset) = \emptyset$ and vice versa \cite[Lemma~ 15, item 2]{kohlas21b}.  Assume therefore $At(\desirs) \not= \emptyset$ and consider $M \in \sigma_x(At(\desirs))$. There is then a $M' \in At(\desirs)$ so that $\epsilon_x(M) = \epsilon_x(M')$. But $\desirs \leq M'$, hence $\epsilon_x(\desirs) \leq \epsilon_x(M') = \epsilon_x(M) \leq M$. Thus $M \in At(\epsilon_x(\desirs))$.
Conversely consider $M \in At(\epsilon_x(\desirs))$. We claim that $\epsilon_x(M) \cdot \desirs \not= 0$. Because otherwise $0 = \epsilon_x(\epsilon_x(M) \cdot \desirs) = \epsilon_x(M) \cdot \epsilon_x(\desirs) = \epsilon_x(M \cdot \epsilon_x(\desirs))$, which is not possible since $\epsilon_x(\desirs) \leq M$. So there is an $M' \in At(\epsilon_x(M) \cdot \desirs)$ so that $\desirs \leq \epsilon_x(M) \cdot \desirs \leq M'$. We conclude that $M' \in At(\desirs)$. Furthermore, $\epsilon_x(\epsilon_x(M) \cdot \desirs) = \epsilon_x(M) \cdot \epsilon_x(\desirs) \leq \epsilon_x(M')$. It follows that $\epsilon_x(M') \cdot \epsilon_x(M) \cdot \epsilon_x(\desirs) \not= 0$ and therefore $\epsilon_x(M) \cdot \epsilon_x(M') \not= 0$. But $\epsilon_x(M) \cdot \epsilon_x(M') = \epsilon_x(M \cdot \epsilon_x(M'))$ so that $M \cdot \epsilon_x(M') \not= 0$ and therefore $\epsilon_x(M') \leq M$ since $M$ is an atom, and then $\epsilon_x(M') \leq \epsilon_x(M)$. Proceed  in the same way from $\epsilon_x(M) \cdot \epsilon_x(M') = \epsilon_x(M' \cdot \epsilon_x(M))$ to obtain $\epsilon_x(M) \leq \epsilon_x(M')$. So finally $\epsilon_x(M) = \epsilon_x(M')$, which together with $M' \in At(\desirs)$ tells us that $M \in \sigma_x(At(\desirs))$. This means that $At(\epsilon_x(\desirs)) = \sigma_x(At(\desirs))$.
\end{proof}
Item 3. again guarantees that if $D \in \Phi$, then $At(D) \in At_Q(\Phi)$. Moreover, since $\Phi$ is atomistic, the maps $f,h,g$ are all one-to-one and then the homomorphism is an embedding. We can say therefore that $\Phi$ is in fact a set algebra. %This is an important representation theorem for the information algebra of coherent sets of gambles \cite{kohlas03,kohlasschmid21}. 

%The maps $f,h$ are in particular one-to-one, hence an embedding of $\Phi$ into the set algebra of subsets of $At(\Phi)$. Therefore $\Phi$ is in fact a set algebra. This is a representation theorem for the information algebra of coherent sets of gambles \cite{kohlas03,kohlasschmid21}. %The set algebra of subsets of $\pspace$ is via the maps $S \mapsto \desirs_S \mapsto At(\desirs_S)$ also embedded into the set algebra of subsets of $At(\Phi)$. But recall that the $\desirs_S$ are strictly desirable sets and belong to the subalgebra $\Phi^+$, form in fact a subalgebra of this subalgebra. We shall see that the algebra of strictly desirable gambles has their own representation as a set algebra of their own atoms, see Sections \ref{sec:LowUpPrev} and  \ref{sec:LinPrev} where these issues are clarified. 
\section{Conclusions}
This  paper  presents  an extension of our work on information  algebras  related  to gambles on a possibility set that is not necessarily multivariate \cite{kohlas21b}. In particular, here we analyze the relation between the domain-free version of the information algebra of coherent sets of gambles and the archetypes of information algebras, i.e., sets algebras. Specifically, we show that it is in fact a set algebra. These facts could also be expressed equivalently in the \emph{labeled view} of information algebras, better adapted to computational purposes \cite{kohlas03,kohlas21b}. This is left for future work, along with other aspects such as the question of conditioning.

%There are two different versions of information algebras: a \emph{domain-free} one, better suited for theoretical studies, and a \emph{labeled one}, better adapted to computational purposes (see \cite{kohlas03}). They are closely related and each one can be derived or reconstructed from the other. %The domain-free version is better suited for theoretical studies, since it is a structure of universal algebra, whereas the labeled one is better adapted to computational purposes (see \cite{kohlas03}).
 % In this paper we treat only the domain-free case. However, it must be intended that all the results found here could also been expressed in the labeled view.

\begin{comment}

For citations of references, we prefer the use of square brackets
and consecutive numbers. Citations using labels or the author/year
convention are also acceptable. The following bibliography provides
a sample reference list with entries for journal
articles~\cite{ref_article1}, an LNCS chapter~\cite{ref_lncs1}, a
book~\cite{ref_book1}, proceedings without editors~\cite{ref_proc1},
and a homepage~\cite{ref_url1}. Multiple citations are grouped
\cite{ref_article1,ref_lncs1,ref_book1},
\cite{ref_article1,ref_book1,ref_proc1,ref_url1}.
\end{comment}
%
% ---- Bibliography ----
%
% BibTeX users should specify bibliography style 'splncs04'.
% References will then be sorted and formatted in the correct style.
%
\bibliographystyle{splncs04}
%\bibliography{mybibliography}
%
%ORIGINAL

%%%%%%%%%%%%%%%%%%%%%%%%%%%%%%%%%%%%%%%%%%%%%%%%%%%%%%%%%%%%
\end{document}